\newtheorem{definition}{Definition}
\newtheorem{theorem}[definition]{Theorem}
\newtheorem{lemma}[definition]{Lemma}
\newtheorem{corollary}[definition]{Corollary}
\newcommand{\maxtap}{\textsc{MaxTAP}\xspace}
\newcommand{\tap}{\textsc{TAP}\xspace}
\newcommand{\remove}[1]{}
\date{}
\title{Tilt Assembly: Algorithms for Micro-Factories That Build Objects with Uniform External Forces}
\author[1]{Aaron T.~Becker\footnote{{Work from this author was partially supported by National Science Foundation IIS-1553063 and  IIS-1619278}}}
\author[2]{Sándor P.~Fekete}
\author[2]{Phillip Keldenich}
\author[2]{Dominik Krupke}
\author[2]{Christian Rieck}
\author[2]{Christian Scheffer}
\author[2]{Arne Schmidt}
\affil[1]{Department of Electrical and Computer Engineering, University of Houston, USA.
\tt atbecker@uh.edu}
\affil[2]{Department of Computer Science, TU Braunschweig, Germany.
	\tt$\{$s.fekete, p.keldenich, d.krupke, c.rieck, c.scheffer, arne.schmidt$\}$@tu-bs.de}
\newcommand{\revised}[1]{{\color{black} #1}}
\newcommand{\newlyrevised}[1]{{\color{black} #1}}
\newcommand{\revision}[1]{{\color{black}#1}}
\begin{document}
	\maketitle
	\begin{abstract}
		
We present algorithmic results for the parallel assembly of many micro-scale objects 
in two  and three dimensions from tiny particles, 
which has been proposed in the context of programmable matter and self-assembly	
for building high-yield micro-factories.
The underlying model has particles moving under the influence of uniform external forces
until they hit an obstacle; particles can bond when being forced together with
another appropriate particle.

Due to the physical and geometric constraints, not all shapes can be built in this manner; 
this gives rise to the {\sc Tilt Assembly Problem} (TAP) of deciding constructibility.
For simply-connected polyominoes $P$ in 2D consisting of $N$ unit-squares (``tiles''), we prove that TAP can	be
decided in $O(N\log N)$ time. 
For the optimization variant {\sc MaxTAP} (in which the
objective is to construct a subshape of maximum possible size), we show {\em polyAPX}-hardness: 
unless \P=\NP, {\sc MaxTAP} cannot be approximated within a factor of $\Omega(N^{\frac{1}{3}})$; for tree-shaped structures,
we give an $O(N^{\frac{1}{2}})$-approximation algorithm. 
For the efficiency of the
assembly process itself, we show that any constructible shape allows {\em
pipelined} assembly, which produces copies of $P$ in $O(1)$ amortized time,
i.e., $N$ copies of $P$ in $O(N)$ time steps. These considerations can be extended to three-dimensional objects: For the class of polycubes $P$ we prove that it is \NP-hard to decide whether it is possible to construct a path between two points
of $P$; it is also \NP-hard to decide constructibility of a polycube $P$. Moreover, it is 
{\em expAPX}-hard to maximize a path from a given start point.
	\end{abstract}

	\section{Introduction}

In recent years, progress on flexible construction at micro- and nano-scale
has given rise to a large set of challenges that deal with
algorithmic aspects of programmable matter. 
Examples of cutting-edge application areas with a strong
algorithmic flavor include self-assembling systems, in which chemical and
biological substances such as DNA are designed to form predetermined shapes or
carry out massively parallel computations; and swarm robotics, in which complex
tasks are achieved through the local interactions of robots with highly limited
individual capabilities, including micro- and nano-robots.

\remove{One particular difficulty when trying to assemble tiny particles to an overall
shape is to move the individual components to their appropriate locations,
where they can be attached, as individual navigation of tiny robotic devices suffers
from lack of energy and control. }
Moving individual particles to their appropriate attachment locations when assembling a shape is difficult because the small size of the particles limits the amount of onboard energy and computation.
One successful approach to dealing with this challenge is to use molecular diffusion in combination with cleverly designed
sets of possible connections: in {\em DNA tile self-assembly}, the particles are equipped
with sophisticated bonds that ensure that only a predesigned shape is produced
when mixing together a set of tiles\revision{, see~\cite{winfree1998algorithmic}}. 
The resulting study of algorithmic tile self-assembly
has given rise to an extremely powerful framework and produced a wide range of impressive results. 
However, the required properties of the building material (which must be 
specifically designed and finely tuned for each particular shape) in combination with the construction
process (which is left to chemical reactions, so it cannot be controlled or stopped until it
has run its course) make DNA self-assembly unsuitable for some applications.

An alternative method for controlling the eventual position of particles is to apply a 
uniform external force, causing all particles to move in a given direction until they hit an obstacle
or another blocked particle. As two of us (Becker and Fekete,~\revision{\cite{Becker3810a}}) have shown in the past, 
combining this approach with custom-made obstacles (instead of custom-made particles) 
allows complex rearrangements of particles, even in grid-like environments with axis-parallel motion. 
The appeal of this approach is that it shifts
the design complexity from the building material (the tiles) to the machinery (the environment).
As recent practical work by Manzoor et al.~\cite{manzoor2017parallel} shows, it is possible to apply this 
to simple ``sticky'' particles that can be forced to bond, see Fig.~\ref{fig:molecular}:
the overall assembly is achieved by adding particles one at a time, attaching them
to the existing sub-assembly. 

	\begin{figure}[h!]
		\centering
		\resizebox{0.2\textwidth}{!}{\includegraphics{./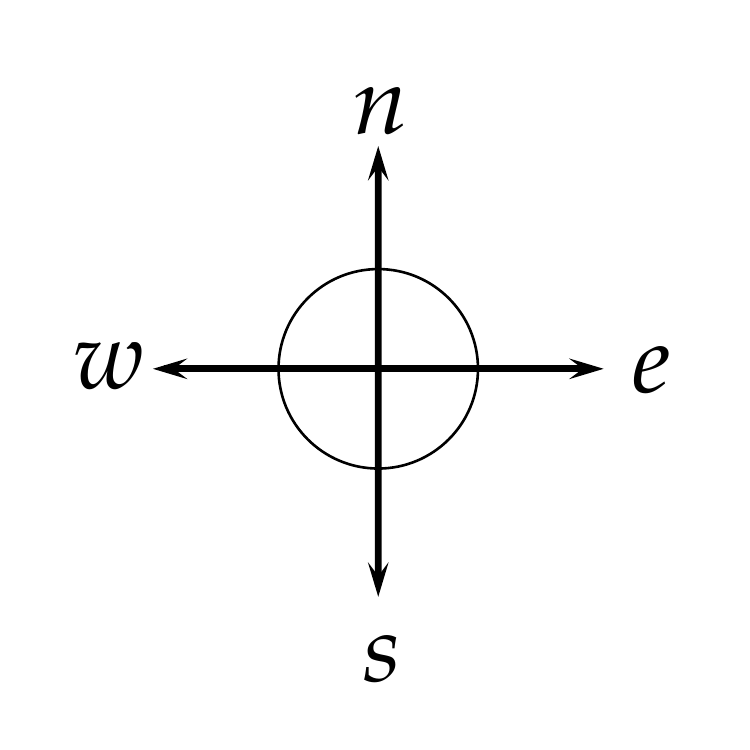}}
   \begin{minipage}[b]{0.75\textwidth}	
		\resizebox{\textwidth}{!}{\includegraphics{./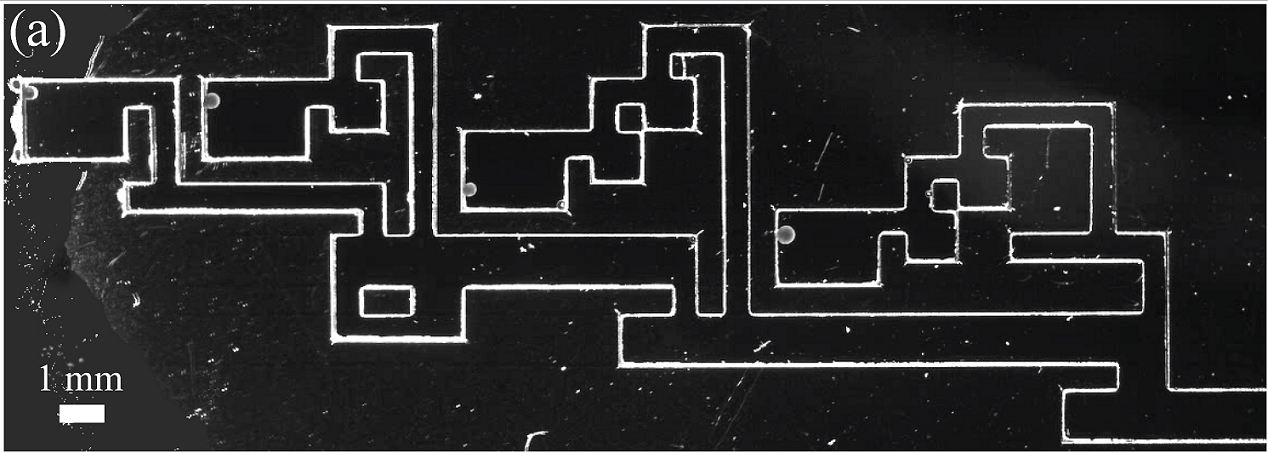}}
\\
		\resizebox{0.495\textwidth}{!}{\includegraphics{./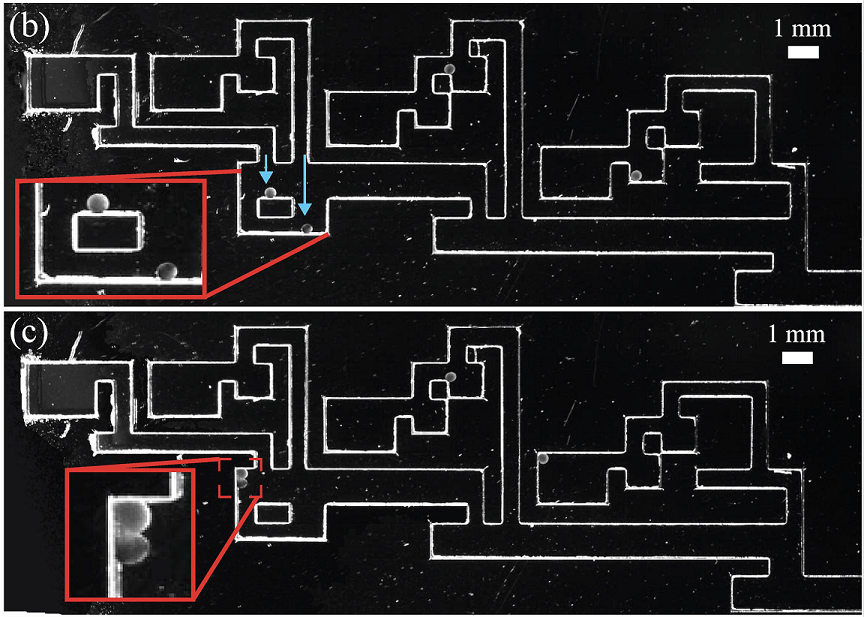}}
		\resizebox{0.495\textwidth}{!}{\includegraphics{./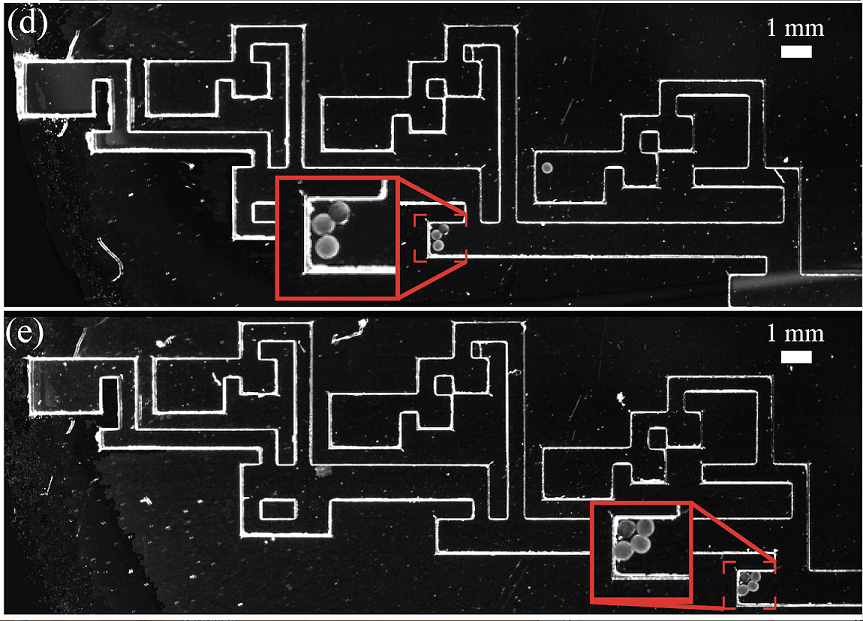}}
		  \end{minipage}
		\caption{A practical demonstration of Tilt Assembly based on alginate (i.e., a gel made by combining a powder derived from seaweed with water) particles~\cite{manzoor2017parallel}.
		(a)~Alginate particles in initial positions. (b)~After
control moves of $\langle e, s, w, n, e, s\rangle$ (for east, south, west, north), the alginate microrobots move to the
shown positions. (c)~After $\langle w, n\rangle$ inputs, the system produces the first
multi-microrobot polyomino. (d)~The next three microrobot polyominoes are produced after
applying multiple $\langle e,s,w,n\rangle$ cycles. (e)~After the alginate microrobots have
moved through the microfluidic factory layout, the final 4-particle polyomino
is generated.} 
\label{fig:molecular}
	\end{figure}

Moreover, pipelining \remove{the production}\revision{this} process may result in efficient rates \revision{of production},
see Fig.~\ref{fig:pipeline}~\cite{manzoor2017parallel}.

	\begin{figure}[h!]
		\centering
		\resizebox{0.1\textwidth}{!}{\includegraphics{./figures/CompassRose.pdf}}
\begin{minipage}[l]{0.45\textwidth}
		\resizebox{.99\textwidth}{!}{\includegraphics{./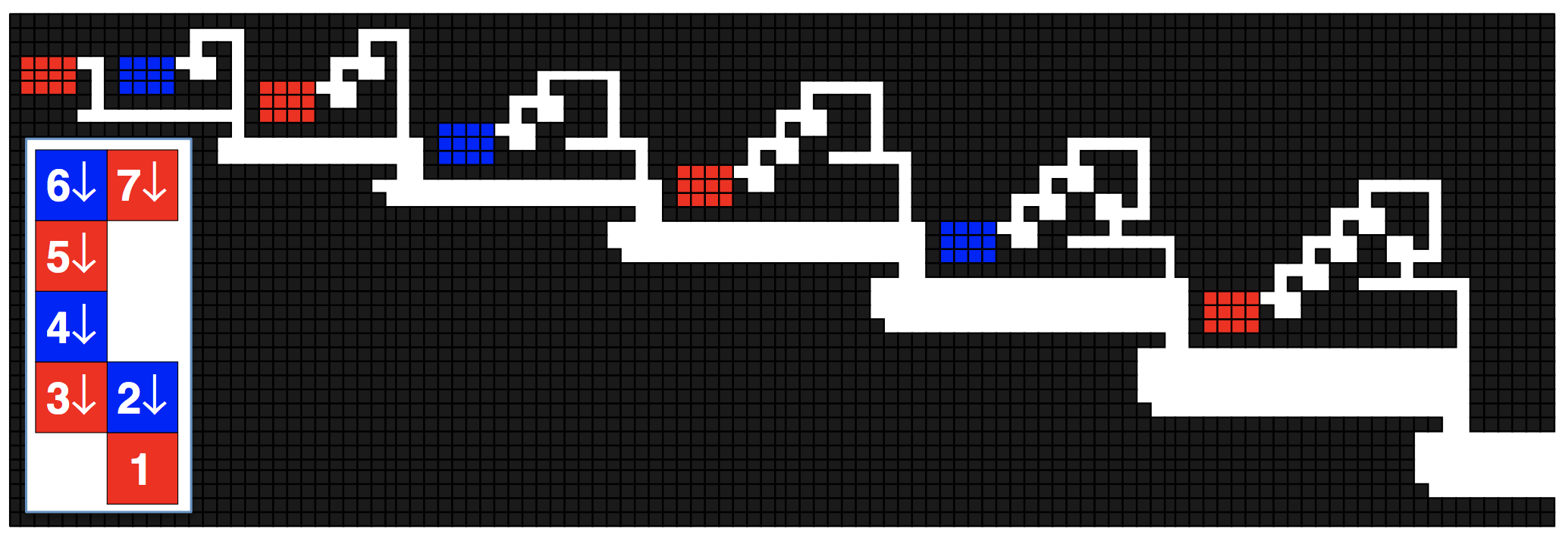}}
\\
		\resizebox{.99\textwidth}{!}{\includegraphics{./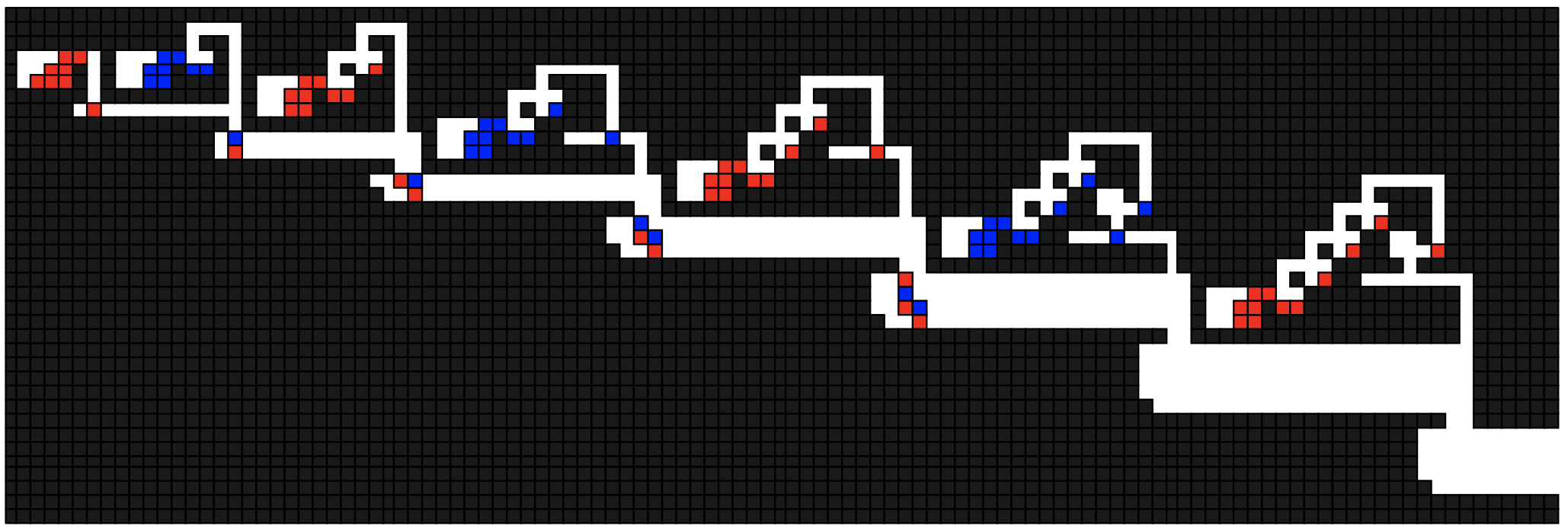}}
\end{minipage}
\begin{minipage}[l]{0.41\textwidth}
	\resizebox{0.99\textwidth}{!}{\includegraphics{./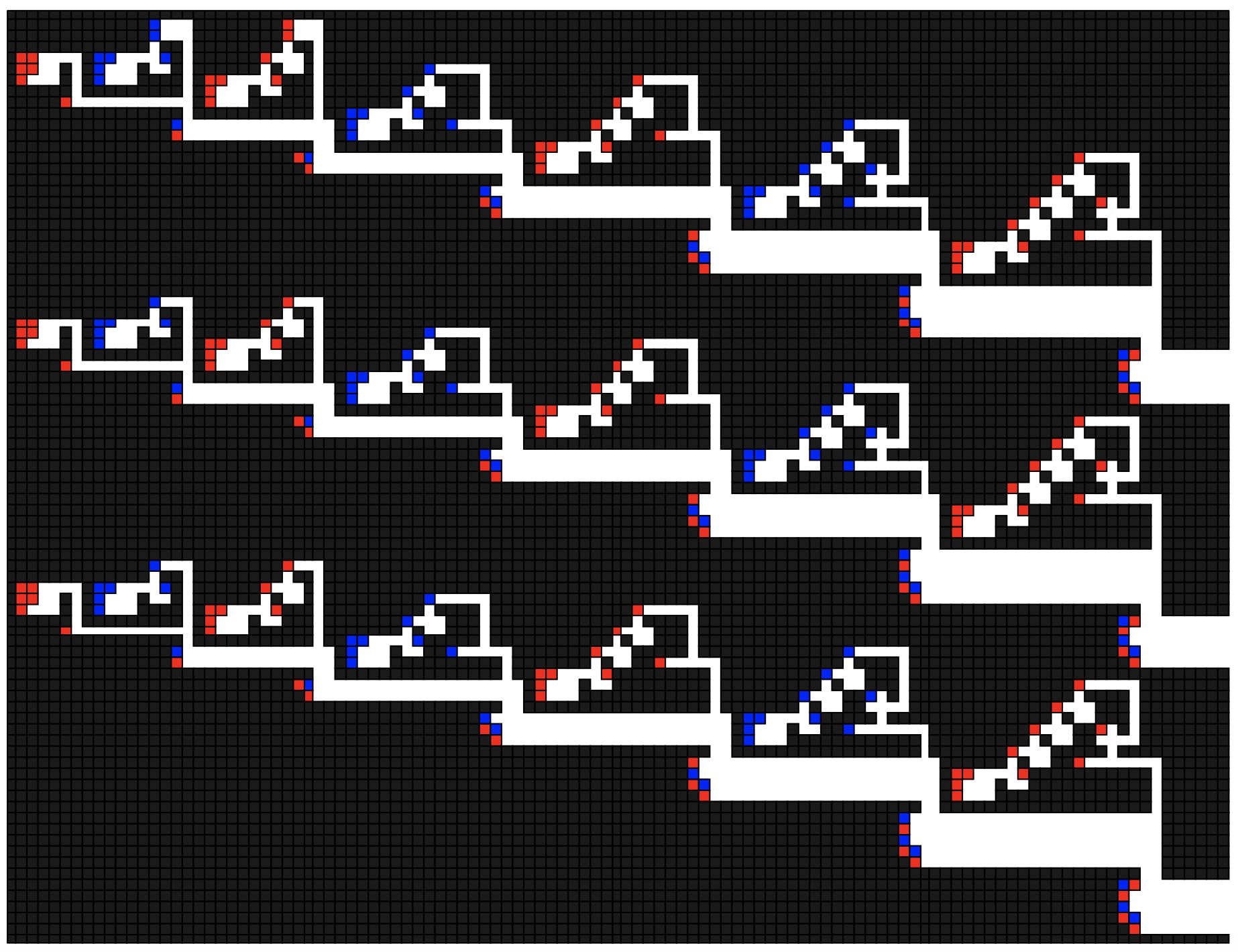}}
\end{minipage}
		\caption{(Top left) Initial setup of a seven-tile polyomino
assembly; the composed shape is shown enlarged on the lower left. The bipartite decomposition into blue and red particles is shown for greater clarity, but can also be used for better control of bonds. The sequence of control moves is $\langle e, s, w, n\rangle$, i.e., a clockwise order. (Bottom left) The situation after 18 control moves. (Right) The situation after 7 full cycles,
i.e., after 28 control moves; shown are three parallel ``factories''.}
\label{fig:pipeline}
	\end{figure}

One critical issue of this approach is the requirement of getting particles to their 
destination without being blocked by or bonding to other particles. As Fig.~\ref{fig:nope} shows,
this is not always possible, so there are some shapes that cannot be constructed
by Tilt Assembly.

	\begin{figure}
		\centering
		\resizebox{0.25\textwidth}{!}{\includegraphics{./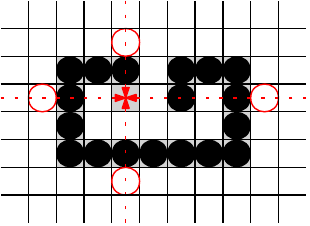}}
		\caption{A polyomino (black) that cannot be constructed by Tilt Assembly: the last tile cannot be attached, as it gets blocked by previously attached tiles.}
		\label{fig:nope}
	\end{figure}

This gives rise to a variety of algorithmic questions:
\revised{(1) Can we decide efficiently whether a given polyomino can be constructed by Tilt Assembly? 
(2) Can the resulting process be pipelined to yield low amortized building time? 
(3) Can we compute a maximum-size subpolyomino that can be constructed? 
(4)  What can be said about three-dimensional versions of the problem?}

	\subsection{Our Contribution}
	We present the results shown in Table~\ref{tab:results}.

\remove{	\begin{itemize}
		\item \tap is decidable in $O(N\log N)$ time for simple (i.e., hole-free) polyominoes \revised{(Section~\ref{sec:constructibility-simple})} \revision{but is \NP-hard for polycubes (Section~\ref{sec:3d-shapes})}.
		\item Any constructible polyomino can be built in a pipelined process, resulting in an amortized construction time of $O(1)$ \revised{(Section~\ref{sec:constructibility-simple})}.
		\item In 3D it is \NP-hard to decide if there is a constructible path between two given points of a polycube shape $P$ \revised{(Section~\ref{sec:3d-shapes})}.
	\end{itemize}
	
	\noindent \revision{Furthermore we present the following results for the optimization variant.}
	\begin{itemize}
		\item \remove{The optimization variant of finding a maximum cardinality constructible subpolyomino is \textit{polyAPX}-hard, i.e., }\maxtap cannot be approximated within a factor of $N^{\frac 1 3}$, unless \P=\NP\, \revised{(Section~\ref{sec:optimization-2d})}.
		\item There is an $N^{\frac 1 2}$-approximation for \maxtap if the polyomino $P$ is simple \revised{(Section~\ref{sec:optimization-2d})}.
		%\item Deciding whether a given polycube is constructible is \NP-hard \revised{(Section~\ref{sec:3d-shapes})}.
		\item Maximizing a path from a given start point in a given polycube is \textit{expAPX}-hard \revised{(Section~\ref{sec:3d-shapes})}.
	\end{itemize}}
	{
	\renewcommand{\arraystretch}{1.5}	
	\setlength{\tabcolsep}{1.5mm}
	\begin{table}[h!]
	
	\begin{tabular}{|l|c|c|c|c|}
		\hline
		{\bf Dimension}&	\textbf{Decision}	&	\textbf{Maximization} & \textbf{Approximation} & \textbf{Constructible Path}\\
		\hline
		2D (simple)& $O(N\log N)$ (Sec.~\ref{sec:constructibility-simple})& \textit{polyAPX}-hard & $\Omega(N^{1/3})$, $O(\sqrt{N})$ (Sec.~\ref{sec:optimization-2d}) & $O(N\log N)$(Sec.~\ref{sec:optimization-2d})\\
			\hline
		3D (general) & \NP-hard \hfill(Sec.~\ref{sec:3d-shapes}) & \textit{polyAPX}-hard& $\Omega(N^{1/3})$, \hfill-\hfill (Sec.~\ref{sec:optimization-2d})& \NP-hard\hspace{0.3cm} (Sec.~\ref{sec:3d-shapes})\\\hline
	\end{tabular}
	\caption{Results for Tilt Assembly Problem (\tap) and its maximization variant (\maxtap)}
	\label{tab:results}
	\end{table}}

	%Not yet documented results:
	%\begin{itemize}
		%\item The longest constructible path in a thin polyomino can be found in $O(N^2\log N)$: $N\times \text{DFS}$, each check $O(\log N)$ and every field is added and removed once from the data structure with a complexity of $O(\log N)$.
		%\item Building the longest subgraph that contains a Hamiltonian path is NP-hard: \maxtap polyAPX proof with loops.
	%\end{itemize}

	\subsection{Related Work}
	Assembling polyominoes with tiles has been considered intensively in
the context of {\em tile self-assembly}. In 1998, Erik Winfree~\cite{winfree1998algorithmic}  
introduced the \textit{abstract tile self-assembly model} (aTAM), in which tiles have glue
types on each of the four sides and two tiles can stick together if their glue
type matches and the bonding strength is sufficient.
Starting with a \textit{seed tile}, tiles will continue to attach to the existing partial assembly until they form a desired polyomino; the process stops when no further attachments are possible.
	Apart from the aTAM, there are various other models like the
\textit{two-handed tile self-assembly model} (2HAM)~\cite{cannon2012two} and the
\textit{hierarchical tile self-assembly model}~\cite{chen2017parallelism}, in which 
we have no single seed but pairs of subassemblies that can attach to each other.
Furthermore, the \textit{staged self-assembly model}~\cite{demaine2008staged,demaine2017new} allows
greater efficiency by  assembling polyominoes in multiple bins which are gradually combined with the content of other bins. 
	
All this differs from the model	in Tilt Assembly, in which each tile has the same glue 
type on all four sides, and tiles are added to the assembly one at a time by attaching them
from the outside along a straight line. 
This approach of externally movable tiles has actually been considered in
practice at the microscale level using biological cells and an MRI, see
\cite{kim2015imparting}, \cite{kim2013swarm}, \cite{becker2014simultaneously}.
Becker et al.~\cite{becker2015toward} consider this for the assembly of a
magnetic {\em Gau{\ss} gun}, which can be used for applying strong local forces
by very weak triggers, allowing applications such as micro-surgery.
	
Using an external force for moving the robots becomes
inevitable at some scale because the energy capacity decreases faster than the
energy demand. A consequence is that all non-fixed robots/particles perform the same
movement, so all particles move in the same direction of the external force
until they hit an obstacle or another particle.  These obstacles allow shaping the
particle swarm.  Designing appropriate sets of obstacles and moves gives rise
to a range of algorithmic problems.  Deciding  whether a given initial
configuration of particles in a given environment can be transformed into a
desired target configuration is \NP-hard~\cite{Becker3810a}, even in a
grid-like setting, whereas finding an optimal control sequence is shown to be
\PSPACE-complete by Becker et al.~\cite{becker2014particle}. However, if 
it is allowed to {\em design} the obstacles in the first place, the problems
become much more tractable~\cite{Becker3810a}\revised{. Moreover, even} complex computations
\revised{become} possible: If we allow additional particles of double size (i.e., two adjacent fields), full
computational complexity is achieved, see Shad et
al.~\cite{shad2015particle}.
Further related work includes gathering a particle swarm at a single 
position~\cite{mahadev2016collecting} and using swarms of very simple robots
(such as Kilobots) for moving objects~\cite{becker2013massive}.
For the case in which human controllers have to move objects by such a swarm, Becker et
al.~\cite{becker2014crowdsourcing} study different control options.
The results are used by Shahrokhi and Becker~\cite{Shahrokhi2015}
to investigate an automatic controller.
	
Most recent and most closely related to our paper is the work by Manzoor et 
al.~\cite{manzoor2017parallel}, who use global control
to assembly polyominoes in a pipelined fashion: after constructing the
first polyomino, each cycle of a small control sequence produces another
polyomino. However, the algorithmic part is purely heuristic; providing
a thorough understanding of algorithms and complexity is the content of our paper.
%To do the construction in parallel, they make use of obstacles the
%particles cannot go through and thus, they may not stick too early.  In our
%paper, however, particles have to move into one direction until they touch
%another particle.

	\section{Preliminaries}
%In this section we define our model used throughout this paper.\
\begin{description}

\item[Polyomino:]
For a set $P\subset \mathbb{Z}^2$ of $N$ grid points in the plane, the graph
$G_P$ is the induced grid graph, in which two vertices $p_1, p_2\in P$ are
connected if they are at unit distance. Any set $P$ with connected grid graph
$G_P$ gives rise to a {\em polyomino} by replacing each point $p\in P$ by a unit 
square centered at $p$, which is called a {\em tile}; for simplicity, we also use $P$ to denote the
polyomino when the context is clear, and refer to $G_P$ as the dual graph of
the polyomino; $P$ is {\em tree-shaped}, if $G_P$ is a tree.
A polyomino is called {\em hole-free} or {\em simple} if and only if the grid graph
induced by $\mathbb{Z}^2\setminus P$ is connected. \\
%Furthermore, a simple polyomino is 
%called \emph{thin} if $P$ has no $2\times 2$ square as a subpolyomino.\\

\item [Blocking sets:] For each point $p\in \mathbb{Z}^2$ we define \textit{blocking sets} $N_p$, $S_p\subseteq P$ as the set of all points $q\in P$ that are above or below $p$ and $|p_x - q_x| \leq 1$. 
Analogously, we define the blocking sets $E_p$, $W_p\subseteq P$ as the set of all points $q\in P$ that are to the right or to the left of $p$ and $|p_y - q_y| \leq 1$.\\

\item[Construction step:]
A \textit{construction step} is defined by a direction (north, east, south,
west, abbreviated by $n,e,s,w$) from which a tile is added and a
latitude/longitude $l$ describing a column or row.  The tile
arrives from $(l, \infty)$ for north, $(\infty, l)$ for east, $(l,-\infty)$ for south, and $(-\infty, l)$ for west into the corresponding direction until it
reaches the first grid position that is adjacent to one occupied by an existing tile.
If there is no such tile, the polyomino does not change.  We note that a position $p$ can
be added to a polyomino $P$ if and only if there is a point $q\in P$ with
$||p-q||_1 = 1$ and one of the four blocking sets, $N_p$, $E_p$, $S_p$ or
$W_p$, is empty. \revision{Otherwise, if none of these sets are empty, this position is \emph{blocked.}}

\item[Constructibility:]
Beginning with a seed tile at some position $p$, 
a polyomino $P$ is {\em constructible} if and only if there is a sequence $\sigma =
\left((d_1,l_1),(d_2,l_2),\dots ,(d_{N-1},l_{N-1})\right)$, such that the
resulting polyomino $P'$, induced by successively adding tiles with $\sigma$,
is equal to $P$.
We allow the constructed polyomino $P'$ to be a translated copy of $P$.
\revision{Reversing $\sigma$ yields a \emph{decomposition sequence}, i.e., a sequence of tiles getting removed from $P$.}
\end{description}
	\section{Constructibility of Simple Polyominoes}
\label{sec:constructibility-simple}
	In this section we focus on hole-free (i.e., simple) polyominoes. 
	We show that the problem of deciding whether a given polyomino can be constructed can be solved in polynomial time.
	This decision problem can be defined as follows.
	
	\begin{definition}[{\sc Tilt Assembly Problem}]
		Given a polyomino $P$, the {\sc Tilt Assembly Problem} (\tap) asks for a sequence 
of tiles constructing $P$, if $P$ is constructible.
	\end{definition}

	% Construction=Destruction, see also the work of Aaron's team.
\subsection{A Key Lemma}
	A simple observation is that construction and (restricted) decomposition are the same problem.
	This allows us to give a more intuitive argument, as it is easier to argue that we do not 
lose connectivity when removing tiles than it is to prove that we do not block future tiles.
	\begin{theorem}
		A polyomino $P$ can be constructed if and only if it can be decomposed \revision{using a sequence of} tile removal steps that \revised{preserve} connectivity.
		A construction sequence is a reversed decomposition sequence.
	\end{theorem}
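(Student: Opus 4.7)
The plan is to prove both directions by exhibiting an explicit bijection between construction sequences and connectivity-preserving decomposition sequences, namely reversal. The crux is a local reversibility claim: for any polyomino $Q$, a tile $p \notin Q$ can be added to $Q$ from direction $d$ (producing $Q \cup \{p\}$) if and only if the tile $p$ can be slid out of $Q \cup \{p\}$ in direction $d$. Once this local claim is in hand, both implications follow immediately.

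To establish the local claim, I would unwind the definitions from the preliminaries. Adding $p$ from direction $d$ requires (i) that $p$ be adjacent to some $q \in Q$ and (ii) that the blocking set in direction $d$ (computed with respect to $Q$) be empty, so that the incoming tile reaches position $p$ without being blocked. Removing $p$ in direction $d$ from $Q \cup \{p\}$ requires that the analogous blocking set (computed with respect to $Q \cup \{p\}$) be empty, so the tile can be slid out to infinity. Since $p \notin N_p, E_p, S_p, W_p$ by definition, the blocking set in direction $d$ at $p$ is identical whether measured in $Q$ or in $Q \cup \{p\}$, so the two conditions coincide. Adjacency to some $q$ is automatic in the removal direction (otherwise $Q$ would not be connected when $p$ is present) and is ensured in the addition direction by the polyomino hypothesis.

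With the local claim, I would argue the forward implication as follows. Suppose $\sigma = ((d_1,l_1),\dots,(d_{N-1},l_{N-1}))$ constructs $P$ starting from a seed, producing intermediate polyominoes $P_1 \subset P_2 \subset \cdots \subset P_N = P$. Reading $\sigma$ in reverse and removing tiles in the order $p_{N-1}, p_{N-2}, \dots, p_1$, each removal is valid by the local claim applied to $Q = P_i \setminus \{p_i\}$, and connectivity is preserved because each intermediate $P_i$ was connected by assumption. For the reverse implication, given a decomposition sequence that preserves connectivity and whose removals are valid slide-outs, I would reverse it and invoke the local claim again to see that every removal becomes a valid addition; connectivity of the growing assembly is inherited from connectivity of the shrinking one.

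The step I expect to require the most care is pinning down the \emph{direction} information inside a decomposition step, since the statement of the theorem only says ``tile removal steps that preserve connectivity.'' I would therefore interpret a decomposition step as a pair (tile, slide-out direction) matching the construction-step format, so that reversal is literally a sequence reversal with directions kept intact. Connectivity preservation is then the only non-obvious side condition, and it is the mirror of the requirement that construction always adds to a connected assembly; everything else reduces to the blocking-set bookkeeping above.
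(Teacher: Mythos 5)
Your proof is correct and follows essentially the same route as the paper: the heart of both arguments is the single-step observation that a tile can be slid out of $Q\cup\{p\}$ in direction $d$ exactly when it can be added to $Q$ from direction $d$, because the relevant blocking set is unchanged. Your version merely spells out the blocking-set bookkeeping and the sequence reversal in more detail than the paper does.
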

	\begin{proof}
		\remove{We can limit on a single step: A tile removed in such a way can be added again, reversing the removal and vice versa.
		Thus, we can go back and forth in a construction/decomposition sequence.}
		To prove this theorem, it suffices to consider a single step.
		Let $P$ be a polyomino and $t$ be a tile that is removed from $P$ into some direction $l$, leaving a polyomino $P'$. 
		Conversely, adding $t$ to $P'$ from direction $l$ yields $P$, as there 
cannot be any tile that blocks $t$ from reaching the correct position, or 
		we would not be able to remove $t$ from $P$ in direction $l$.
	\end{proof}

	For hole-free polyominoes we can efficiently find a construction/decomposition sequence if one exists.
	The key insight is that one can greedily remove \emph{convex} tiles. \remove{tiles having as one of their corners a convex corner of the polyomino}\revision{A tile $t$ is said to be convex if and only if there is a $2\times 2$ square solely containing $t$}; see Fig.~\ref{fig:constr:removingwrongconvex}.
	If a convex tile is {\em not} a cut tile, i.e., it is a tile whose removal does {\em not} disconnect the polyomino, its removal does not interfere with the decomposability of the remaining polyomino.

	\begin{figure}
		\centering
		\begin{subfigure}[t]{0.42\textwidth}
			\centering
			\includegraphics{./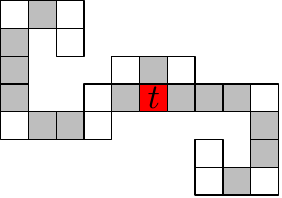}
			\caption{Removing $t$ destroys decomposability. The polyomino can be decomposed by starting with the three tiles above $t$.}
			\label{fig:constr:sop:nonconvex}
		\end{subfigure}
		\hfill
		\begin{subfigure}[t]{0.42\textwidth}
			\centering
			\includegraphics[scale=0.5]{./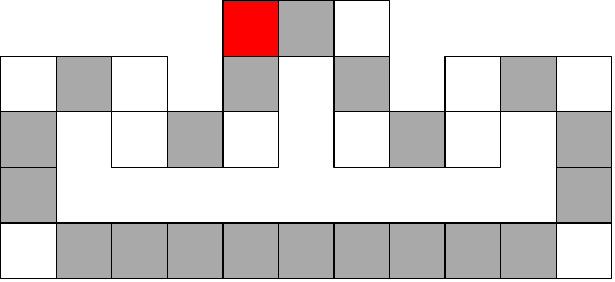}
			\caption{Removing the red convex tile leaves the polyomino non-decomposable; it can be decomposed by starting from the bottom or the sides.}
			\label{fig:constr:removingwrongconvex-nonsimple}
		\end{subfigure}
		\caption{\revised{Two polyominoes and their convex tiles (white).} (a) Removing non-convex tiles may destroy decomposability. (b) In case of non-simple polygons we may not be able to remove convex tiles.}
		\label{fig:constr:removingwrongconvex}
	\end{figure}

	This \revision{conclusion} is based on the observation that a minimal cut (i.e., a minimal set of vertices whose removal leaves a disconnected polyomino) of cardinality two in a hole-free polyomino \revision{always} consists of two (possibly diagonally) adjacent tiles.
	Furthermore, we can always find such a removable convex tile in any decomposable hole-free polyomino.
	This allows us to devise a simple greedy algorithm.

	We start by showing that if we find a non-blocked convex tile that is not a cut tile, we can simply remove it.
	It is important to focus on convex tiles, as the removal of non-convex tiles can harm the decomposability: see Fig.~\ref{fig:constr:sop:nonconvex} for an illustration.
	In non-simple polyominoes, the removal of convex tiles can destroy decomposability, as demonstrated in Fig.~\ref{fig:constr:removingwrongconvex-nonsimple}.

	\begin{lemma}
		\label{lemma:constr:sop:stillconstr}
		Consider a non-blocked non-cut convex tile $t$ in a hole-free polyomino $P$.
		The polyomino $P-t$ is decomposable if and only if $P$ is decomposable.
	\end{lemma}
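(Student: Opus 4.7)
The reverse direction is immediate: given any decomposition sequence of $P - t$, prepending the removal of $t$ yields a valid decomposition of $P$, since by hypothesis $t$ is non-blocked and non-cut in $P$. For the forward direction, I would prove the stronger statement that any decomposition sequence of $P$ can be reordered so that $t$ is removed first; stripping off that first step then gives a decomposition of $P - t$. Two observations simplify the analysis throughout: since tile removals only shrink blocking sets, $t$ remains non-blocked in every intermediate polyomino $Q_i = P \setminus \{t_1, \ldots, t_i\}$, so the only possible obstacle to advancing $t$'s removal is that $Q_i - t$ becomes disconnected; and removing a non-blocked tile from a hole-free polyomino preserves hole-freeness, because the empty blocking direction of that tile connects the removed cell to the exterior of $P$.

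I would proceed by strong induction on $|P|$. The base case $|P| = 1$ is trivial. For the inductive step, fix any decomposition $\sigma = (t_1, \ldots, t_N)$ of $P$. If $t_1 = t$, we are done. Otherwise $P - t_1$ is hole-free, decomposable (via the tail of $\sigma$), of size $N - 1$, and $t$ is still convex and non-blocked in it. In the \emph{easy case} where $t$ remains non-cut in $P - t_1$, the inductive hypothesis applied to $P - t_1$ yields a decomposition of $P - t_1$ beginning with $t$, say $(t, r_2, \ldots, r_{N-1})$; prepending $t_1$ and swapping the first two entries gives $(t, t_1, r_2, \ldots, r_{N-1})$, a decomposition of $P$ starting with $t$. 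The swap is valid because removing $t$ from $P$ is legal by assumption, and removing $t_1$ from $P - t$ is legal since $t_1$ is non-blocked there (by the first observation) and $P - t - t_1 = (P - t_1) - t$ is connected by the case assumption.

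It remains to handle the \emph{hard case} in which $t$ becomes a cut tile in $P - t_1$. Then $\{t, t_1\}$ is a minimal 2-cut of $P$ whose tiles are each individually non-cut. By the structural observation stated just before the lemma, $t_1$ must be edge- or corner-adjacent to $t$. The edge-adjacent subcase is ruled out: if $t_1$ were an edge-neighbor of $t$, then in the connected polyomino $P - t_1$ the tile $t$ would have at most one remaining edge-neighbor and hence be a leaf (or isolated), whose removal cannot disconnect $P - t_1$. Hence $t_1$ must be strictly corner-adjacent to $t$, and by convexity is confined to the three corner positions of $t$ that are not excluded by the empty $2 \times 2$ witness. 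I would then exhibit an alternative decomposition of $P$ whose first tile $t_1'$ lies outside these three corner positions, so that $\{t, t_1'\}$ is not a 2-cut by the same characterization and the easy case applies. The main obstacle is this last step: establishing that such an alternative first tile always exists requires ruling out the a priori possibility that every valid first removal of $P$ pairs with $t$ to form a 2-cut, which I expect will require a careful local case analysis combining the convexity of $t$ with the hole-free 2-cut characterization, especially in the subcase where $t_1$ is the diagonal corner-neighbor of $t$ simultaneously adjacent to both edge-neighbors of $t$.
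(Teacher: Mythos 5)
Your reverse direction, your two preliminary observations (blocking sets only shrink under removals; removing a non-blocked tile preserves hole-freeness), your easy case with the swap of $t$ and $t_1$, and your reduction of the hard case to the situation where $t_1$ is a strictly diagonal neighbour of $t$ (the edge-adjacent subcase being killed because a convex tile has at most two edge-neighbours, so $t$ would be a leaf in $P-t_1$) are all sound, and up to that point you are essentially where the paper is after its first paragraph: w.l.o.g.\ the problematic tile is the first tile $t'$ of the sequence, $\{t,t'\}$ is a minimal $2$-cut, and by hole-freeness $t$ and $t'$ are diagonal neighbours sharing two common neighbours $a$ and $b$. However, this diagonal case is exactly where the content of the lemma lies, and your proposal stops there: you only announce the plan to exhibit an alternative decomposition of $P$ whose first tile $t_1'$ does not form a $2$-cut with $t$, and you explicitly concede that you cannot rule out that \emph{every} legal first removal pairs with $t$ into a $2$-cut. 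Nothing in the proposal establishes the existence of such a $t_1'$, so the forward direction is not proved; this is a genuine gap, not a routine detail.

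It is also worth noting that the paper does not pursue your alternative-first-tile strategy at all; it keeps the given sequence beginning with $t'$ and repairs it locally, using that both $t$ and $t'$ are non-blocked and distinguishing whether their free directions are orthogonal or parallel. In the orthogonal case, one of the shared neighbours (say $a$) is itself a non-blocked convex tile, and after removing $a$ the tiles $t$ and $t'$ can be removed in either order. In the parallel case, $t$ and $t'$ are swapped in the sequence; the original sequence must remove $a$ or $b$ before $t$, after which $t$ is a leaf and irrelevant for connectivity, and until then the only tile whose removability could differ between the two orders is the single cell $c$ next to $t$, which must be a leaf at the moment of its removal, so feasibility is preserved. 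To close your hard case you would need either this kind of local repair argument or a completed proof of your claim that a first tile avoiding a $2$-cut with $t$ always exists; as written, the latter is unsupported and is precisely the step you identify as the main obstacle.
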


	\begin{proof}
		The first direction is trivial: if $P-t$ is decomposable, \remove{then also }$P$ is decomposable \revised{as well}, because we can \remove{easily }remove \revised{the non-blocked tile} $t$ first \remove{(not blocked) }and afterwards use the \revised{existing} decomposition sequence for $P-t$.
		\remove{For the other direction we need some case distinctions.}\revised{The other direction requires some case distinctions.}
		\remove{We prove it by contradiction, assuming that $P$ is decomposable but $P-t$ is not, i.e., $t$ is an important tile for the later decomposition.}\revised{Suppose for contradiction that $P$ is decomposable but $P-t$ is not, i.e., $t$ is important for the later decomposition.}
	
	Consider a valid decomposition sequence for $P$ and the first tile $t'$ we cannot remove if we \revised{were to} remove $t$ in the beginning.
	W.l.o.g., let $t'$ be the first tile in this sequence (removing all previous tiles obviously does not destroy the decomposability).
	When we remove $t$ first, we are missing a tile, hence $t'$ cannot be blocked but has to be a cut tile in the remaining polyomino $P-t$.
	The presence of \remove{tile }$t$ preserves connectivity, i.e., $\{t,t'\}$ is a minimal cut on $P$.
	Because \remove{we do not allow any holes}\revised{$P$ has no holes}, then \remove{only possibility is that }$t$ and $t'$ \revised{must be} diagonal neighbors, sharing the neighbors $a$ and $b$.
	Furthermore, by definition neither of $t$ and $t'$ is blocked in some direction.
	We make a case distinction on the relation of these two directions.
	\begin{description}
		\item[The directions are orthogonal (Fig.~\ref{fig:constr:sop:orthogonal}).]
		Either $a$ or $b$ is a non-blocked convex tile, because $t$ and $t'$ are both non-blocked\revised{; w.l.o.g., let this be $a$}.
		\remove{Let us assume it is $a$.}%
		It is easy to see that independent of removing $t$ or $t'$ first, after removing $a$ we can also remove the other one.
		\item[The directions are parallel (Fig.~\ref{fig:constr:sop:parallel}).]
		This case is slightly more involved.
		By assumption, we have a decomposition sequence beginning with $t'$.
		We show that swapping $t'$ with our convex tile $t$ in this sequence preserves feasibility.
		
		The original sequence has to remove either $a$ or $b$ before it removes $t$, as otherwise the connection between the two is lost when $t'$ is removed first.
		After either $a$ or $b$ is removed, $t$ becomes a leaf and can no longer be important for connectivity.
		Thus, we only need to consider the sequence until either $a$ or $b$ is removed.
		The main observation is that $a$ and $b$ block the same tiles as $t$ or $t'$, except for tile $c$ as in Fig.~\ref{fig:constr:sop:parallel}.
		However, when $c$ is removed, it has to be a leaf, because $a$ is still not removed and in the original decomposition sequence, $t'$ has already been removed.
		Therefore, a tile $d\neq t'$ would have to be removed before $c$.
		Hence, the decomposition sequence remains feasible, concluding the proof.\qedhere
	\end{description}
\end{proof}
	\begin{figure}
	\centering
	\begin{subfigure}[t]{0.45\textwidth}
		\centering
		\resizebox{0.7\columnwidth}{!}{\includegraphics{./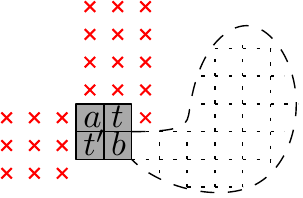}}
		\caption{If the unblocked directions of $t$ and $t'$ are orthogonal, one of the two adjacent tiles (w.l.o.g. $a$) \revised{cannot have any} further neighbors. There can also be no tiles in the upper left corner, because the polyomino cannot cross the two free directions of $t$ and $t'$ (red marks).}
		\label{fig:constr:sop:orthogonal}
	\end{subfigure}
	\hfil
	\begin{subfigure}[t]{0.45\textwidth}
		\centering
		\resizebox{0.7\columnwidth}{!}{\includegraphics{./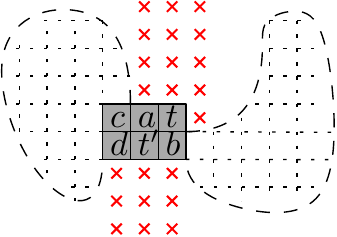}}
		\caption{If the unblocked directions of $t$ and $t'$ are parallel, there is only the tile $c$ for which something can change if we remove $t$ before $t'$.}
		\label{fig:constr:sop:parallel}
	\end{subfigure}
	\caption{The red marks indicate that no tile is at this position; the dashed outline represents the rest of the polyomino.}
\end{figure}

Next we show that such a convex tile always exists if the polyomino is decomposable.
\begin{figure}[h!]
	\centering
	\begin{subfigure}[t]{0.45\textwidth}
		\centering
		\resizebox{0.55\columnwidth}{!}{\includegraphics{./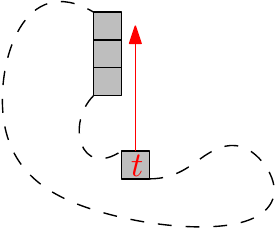}}
		\caption{If the removal direction of $t$ is not crossed, the last blocking tile has to be convex (and has to be removed before).}
		\label{fig:constr:sop:convex:a}
	\end{subfigure}
	\hfil
	\begin{subfigure}[t]{0.45\textwidth}
		\centering		
		\resizebox{0.65\columnwidth}{!}{\includegraphics{./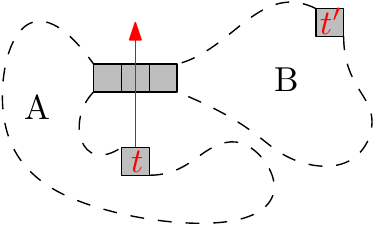}}
		\caption{\remove{If some tile is crossed by the removal direction of $t$, component $A$ is disconnected from component $B$, because the polygon is hole-free. But component $B$ also needs to have a convex tile $t'$ would have to be removed before.}
			If the removal direction of $t$ crosses $P$, then $P$ gets split into components $A$ and $B$. Component $B$ has a convex tile $t'$ that needs to be removed before $t$.}
		\label{fig:constr:sop:convex:b}
	\end{subfigure}
	\caption{\revised{Polyominoes} for which no convex tile should be removable, showing the contradiction to $t$ being the first blocked convex tile in $P$ getting removed.}
	\label{fig:constr:sop:convex}
\end{figure}
\begin{lemma}
	\label{lemma:constr:sop:removeconex}
	Let $P$ be a decomposable polyomino.
	Then there exists a convex tile that is removable without destroying connectivity.
\end{lemma}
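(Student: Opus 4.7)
The plan is to argue by contradiction: assume $P$ is a decomposable hole-free polyomino in which no convex tile is simultaneously unblocked and non-cut, and derive that some supposed decomposition sequence cannot be carried out. Fix an arbitrary valid decomposition sequence $\pi$ of $P$. Every nonempty polyomino has a convex tile (e.g. the leftmost tile of its topmost row), so $\pi$ eventually removes a tile that is convex in the original $P$; let $t$ be the first such, and write $d$ for the direction in which $\pi$ removes $t$, which I take WLOG to be north.

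By the standing assumption, $t$ cannot be removed from $P$ as the very first step. The relevant failure mode turns out to be blockage: at the moment $\pi$ removes $t$, the set $N_t$ is empty in the current polyomino, yet in the original $P$ it is nonempty, so some tiles of $N_t\cap P$ are removed by $\pi$ strictly before $t$. I would then split into the two geometric cases depicted in Figure~\ref{fig:constr:sop:convex}, according to whether the vertical ray above $t$ leaves $P$ and then re-enters $P$ further above. The cut-tile sub-case of the initial dichotomy reduces to the same split applied to whichever cardinal direction is unblocked in $P$, so it does not need separate treatment.

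In the non-crossing case of Figure~\ref{fig:constr:sop:convex:a}, the tiles of $N_t\cap P$ form a single connected blocking chunk attached to $t$'s neighborhood from above. I pick the topmost tile $t^{\ast}$ of this chunk. By the topmost choice, the row immediately above $t^{\ast}$ within the three-column strip is empty in $P$, hence one of the two $2\times 2$ windows incident to $t^{\ast}$ contains only $t^{\ast}$; thus $t^{\ast}$ is convex in $P$. Since $\pi$ removes $t^{\ast}$ strictly before $t$, this contradicts the choice of $t$ as the first convex-in-$P$ tile in $\pi$.

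In the crossing case of Figure~\ref{fig:constr:sop:convex:b}, the column above $t$ leaves $P$ and later re-enters; because $P$ is hole-free, the part of $P$ lying above the gap is forced to form a sub-polyomino $A$ whose only link to $t$'s sub-polyomino $B$ bypasses the vertical strip. The finite polyomino $A$ contains a convex tile $t'$ of its own, and this convexity is inherited in $P$ since all $P$-neighbors of $t'$ already lie inside $A$. But $\pi$ must clear $A$, and hence $t'$, before $t$ can be released northward, contradicting the choice of $t$. The main obstacles are formalizing the ``crossing versus non-crossing'' dichotomy and verifying that the chosen $t^{\ast}$ or $t'$ carries the required empty $2\times 2$ window in the original $P$ rather than merely in a pruned intermediate state of $\pi$; the remainder is bookkeeping on $\pi$.
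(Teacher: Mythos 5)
Your blocked case follows the paper's own strategy (take the first tile of $\pi$ that is convex in the original $P$, split on whether the exit ray is crossed, and exhibit a convex tile that must have been removed earlier), but the way you dispose of the cut-tile case is a genuine gap. Your premise that ``the relevant failure mode turns out to be blockage'' is exactly what fails when $t$ is convex and unblocked in $P$ but is a cut tile of $P$: then the direction in which $\pi$ eventually removes $t$ may have had an empty blocking set from the very start, so no tile of that blocking set is ever removed before $t$, and your whole mechanism produces nothing. The claimed reduction --- ``apply the same split to whichever cardinal direction is unblocked in $P$'' --- cannot work, because in an unblocked direction the three-column strip contains no tiles of $P$ at all: there is no ``topmost blocking tile'' and the ray cannot cross $P$, so neither of your two sub-arguments yields a convex tile that must precede $t$, and no contradiction is derived. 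The paper needs, and gives, a separate argument here: $P-t$ splits into components, each component contains a convex tile of the original $P$ (any polyomino with at least two tiles has at least two convex tiles, and re-attaching $t$ spoils the convexity of at most one of them), and since every path between the components runs through $t$, connectivity of the intermediate shapes forces one entire component --- including its convex tile --- to be removed before $t$, contradicting the choice of $t$ as the first convex tile removed. This counting-plus-connectivity step is the missing idea; without it the lemma is not proved.

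A secondary weakness: in your non-crossing case you assert that $N_t\cap P$ is a single connected chunk and that its topmost tile $t^{\ast}$ has an empty $2\times 2$ window \emph{in $P$}. Neither is automatic --- the blocking set need not be connected, and tiles of $P$ lying just outside the three-column strip can occupy both $2\times 2$ windows above a topmost strip tile, so ``topmost'' alone does not certify convexity in $P$. The paper is admittedly also terse at this point (it simply states that the last blocking tile, respectively the cut-off component, supplies a convex tile), so this is a matter of polish rather than a fatal flaw, but if you formalize the argument you should pick the witness tile more carefully (e.g.\ via an extreme tile of the region that must be cleared) rather than relying on ``topmost in the strip''.
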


\begin{proof}
	We prove this by contradiction based on two possible cases.	
	
	Assume $P$ to be a decomposable polyomino in which no convex tile is removable.
	Because $P$ is decomposable, there exists some feasible decomposition sequence $S$.
	Let $P_\text{convex}$ denote the set of convex tiles of $P$ and let $t\in P_\text{convex}$ be the first removed convex tile in the decomposition sequence $S$.
	By assumption, $t$ cannot be removed yet, so it is either blocked or a cut tile. 
	\begin{description}
		\item[$t$ is blocked.] Consider the direction in which we would remove $t$.
		If it does not cut the polyomino, the last blocking tile has to be convex (and would have to be removed before $t$), see Fig.~\ref{fig:constr:sop:convex:a}.
		If it cuts the polyomino, the component cut off also \revised{must} have a convex tile and the full component has to be removed before $t$, see Fig.~\ref{fig:constr:sop:convex:b}.
		This is again a contradiction to $t$ being the first convex tile to be removed in $S$.
		\item[$t$ is a cut tile.] $P-t$ consists of exactly two connected polyominoes,  $P_1$ and $P_2$.
		It is easy to see that $P_1\cap P_\text{convex}\not = \emptyset$ and $P_2\cap P_\text{convex}\not = \emptyset$, because every polyomino of size $n\geq 2$ has at least two convex tiles of which at most one becomes non-convex by adding $t$. (A polyomino of size $1$ is trivial.)	
		Before being able to remove $t$, either $P_1$ or $P_2$ has to be completely removed, including their convex tiles.
		This is a contradiction to $t$ being the first convex tile in $S$ to be removed.\qedhere
	\end{description}
\end{proof}

\subsection{An Efficient Algorithm}
An iterative combination of these two lemmas proves the correctness of greedily removing convex tiles.
As we show in the next theorem, using a search tree technique allows an efficient implementation of this greedy algorithm.
\begin{theorem}
	A hole-free polyomino can be checked for decomposability/constructibility in \revised{time} $O(N \log N)$. 
\end{theorem}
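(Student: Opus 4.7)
The plan is to implement the greedy strategy implied by Lemmas~\ref{lemma:constr:sop:stillconstr} and~\ref{lemma:constr:sop:removeconex} -- repeatedly remove some non-blocked non-cut convex tile -- in amortized $O(\log N)$ time per removal. By those two lemmas, this strategy succeeds precisely when $P$ is decomposable, and the reverse of the produced removal order is a valid construction sequence; so it suffices to bound the running time.

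First I would set up the auxiliary search trees promised by the theorem statement. For every row $y$ of the grid, maintain a balanced BST of the $x$-coordinates of tiles of $P$ currently in that row, and symmetrically one BST per column, together with a hash map from grid coordinates to tiles. With these, the convexity test for a given tile is $O(1)$ (inspect its constant-sized neighborhood), and the non-blockedness test in a given direction is $O(\log N)$, since it amounts to one predecessor/successor query in each of the three relevant rows or columns.

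The main loop maintains a candidate set $C$ of currently removable convex tiles. At each step, pop some $t \in C$, delete $t$ from $P$ and from the BSTs in $O(\log N)$ time, and then re-examine only the $O(1)$ orthogonal neighbours of $t$ in order to update $C$. The reason this local re-examination suffices is monotonicity: both convexity and non-blockedness are preserved under deletions (once a blocking set is empty, or a tile sits in an otherwise empty $2\times 2$ square, it stays so until the tile itself is removed). Hence only the non-cut status can change, and only for neighbours of the just-removed tile. If the loop empties $P$, we report success; if it halts with $C = \emptyset$ but $|P| > 1$, Lemma~\ref{lemma:constr:sop:removeconex} certifies non-decomposability.

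The main obstacle is the non-cut test within the $O(\log N)$ per-step budget. Since a convex tile has at most two orthogonal neighbours, the only non-trivial case is a convex tile $t$ in an $L$-configuration with neighbours $a$ and $b$, and the test reduces to deciding whether $a$ and $b$ remain connected in $P\setminus\{t\}$. I would answer these decremental connectivity queries by maintaining a spanning tree of the dual graph $G_P$ via a link-cut / Euler-tour structure, exploiting hole-freeness so that each local update near the removed convex tile costs $O(\log N)$ amortized. Summing the $O(\log N)$ work over the $N$ removals then yields the claimed $O(N\log N)$ bound.
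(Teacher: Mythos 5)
There is a genuine gap in your maintenance of the candidate set $C$. Your monotonicity observation (convexity and non-blockedness can only be gained, never lost, under deletions) is correct, but it addresses the wrong direction: the danger is not that candidates leave $C$, it is that tiles must \emph{enter} $C$. Removing $t$ can make a tile that is arbitrarily far away become unblocked, namely any tile for which $t$ was the last remaining member of one of its blocking sets (these lie in the three columns $x-1,x,x+1$ or the three rows $y-1,y,y+1$ of $t$, but not necessarily adjacent to $t$); it can also make a \emph{diagonal} neighbour of $t$ convex. Re-examining only the $O(1)$ orthogonal neighbours of $t$ misses both kinds of new candidates, so the invariant ``$C$ is nonempty whenever a removable convex tile exists'' fails, and halting with $C=\emptyset$ no longer certifies non-decomposability via Lemma~\ref{lemma:constr:sop:removeconex}. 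The paper's fix is exactly here: after deleting $t$ it queries the search trees for the up to $12$ tiles that are \emph{nearest} to $t$ in each of the four directions along the three relevant lines, re-tests those, and argues that any tile beyond them is still blocked by them; this keeps the update at $O(\log N)$ per removal.

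The second deviation is the cut test. You treat it as the main obstacle and propose decremental connectivity via a link-cut/Euler-tour structure, asserting $O(\log N)$ amortized ``by hole-freeness''; as stated this is unsubstantiated (a vertex deletion may cut a tree edge and require a replacement-edge search, which a plain link-cut tree does not provide in $O(\log N)$), and it is also unnecessary. This is precisely where the paper uses simplicity: a convex tile has at most two neighbours, necessarily forming an L, and in a hole-free polyomino such a tile is a cut tile if and only if the diagonal cell of that L is absent (otherwise a connecting path would enclose an empty cell, i.e., a hole). Since only convex tiles are ever removed, the complement stays connected, so hole-freeness and hence this $O(1)$ local test remain valid throughout; this also means cut status must be re-verified (cheaply) when a tile is popped, since unlike convexity and blockedness it is not monotone. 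With these two repairs your algorithm coincides with the paper's and the $O(N\log N)$ bound follows.
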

\begin{proof}
	Lemma~\ref{lemma:constr:sop:stillconstr} allows us to remove any convex tile, as long as it is not blocked and does not destroy connectivity.
	Applying the same lemma on the remaining polyomino iteratively creates a feasible decomposition sequence.
	Lemma~\ref{lemma:constr:sop:removeconex} proves that this is always sufficient.
	If and only if we can at some point no longer find a matching convex tile (to which we refer 
as {\em candidates}), the polyomino cannot be decomposable.
	
	Let $B$ be the time needed to check whether a tile $t$ is blocked.
	A na\"ive way of doing this is to try out all tiles and check if $t$ gets blocked,
	requiring time $O(N)$.
	With a preprocessing step, we can decrease $B$ to $O(\log N)$ by using $O(N)$ binary search trees for searching for blocking tiles and utilizing that removing a tile can change the state of at most $O(1)$ tiles.
	%Balanced search tree
	For every vertical line $x$ and horizontal line $y$ going through $P$, we create a balanced search tree, i.e., for a total of $O(N)$ search trees.
	An $x$-search tree for a vertical line $x$ contains tiles lying on \revised{$x$}, sorted by their $y$-coordinate.
	Analogously define a $y$-search tree for a horizontal line $y$ containing tiles lying on $y$ sorted by their $x$-coordinate.
	We iterate over all tiles $t=(x,y)$ and insert the tile in the corresponding $x$- and $y$-search tree with a total complexity of $O(N\log N)$.
	Note that the memory complexity remains linear, because every tile is in exactly two search trees.
	To check if a tile at position $(x', y')$ is blocked from above, we can simply search in the $(x'-1)$-, $x'$- and $(x'+1)$-search tree for a tile with $y>y'$.
	We analogously perform search queries for the other three directions, and thus have $12$ queries of total cost $O(\log N)$.
	
	%Collect removable leaves
	We now iterate on all tiles and add all convex tiles that are not blocked and are not a cut tile to the set $F$ (cost $O(N \log N)$).
	Note that checking whether a tile is a cut tile can be done in constant time, because it suffices to look into the local neighborhood.
	While $F$ is not empty, we remove a tile from $F$, from the polyomino, and from its two search trees in time $O(\log N)$.
	Next, we check the up to $12$ tiles that are blocked first from the removed tile for all four orientations, see Fig.~\ref{fig:constr:tree}.
	\begin{figure}
		\centering
		\resizebox{0.25\textwidth}{!}{\includegraphics{./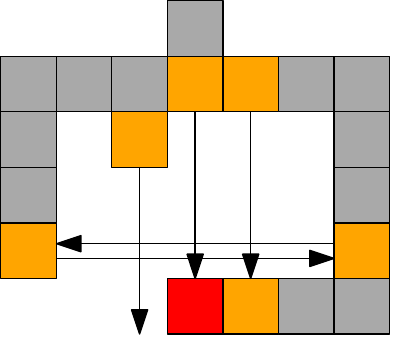}}
		\caption{When removing the red tile, only the orange tiles can become unblocked or convex.}
		\label{fig:constr:tree}
	\end{figure}
	Only these tiles can become unblocked or a convex tile.
	Those that are convex tiles, not blocked and no cut tile are added to $F$.
	All tiles behind those cannot become unblocked as the first tiles would still be blocking them.
	The cost for this is again in $O(\log N)$.
	This is continued until $F$ is empty, which takes at most $O(N)$ loops each of cost $O(\log N)$.
	If the polyomino has been decomposed, the polyomino is decomposable/constructible by the corresponding tile sequence.
	Otherwise, there cannot exist such a sequence.
	%Start tile
	By prohibiting to remove a specific tile, one can force a specific start tile.	
\end{proof}

\begin{figure}
	\centering\hfill
	\resizebox{0.15\textwidth}{!}{\includegraphics{./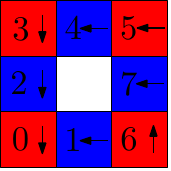}}\hfill
	\resizebox{0.5\textwidth}{!}{\includegraphics{./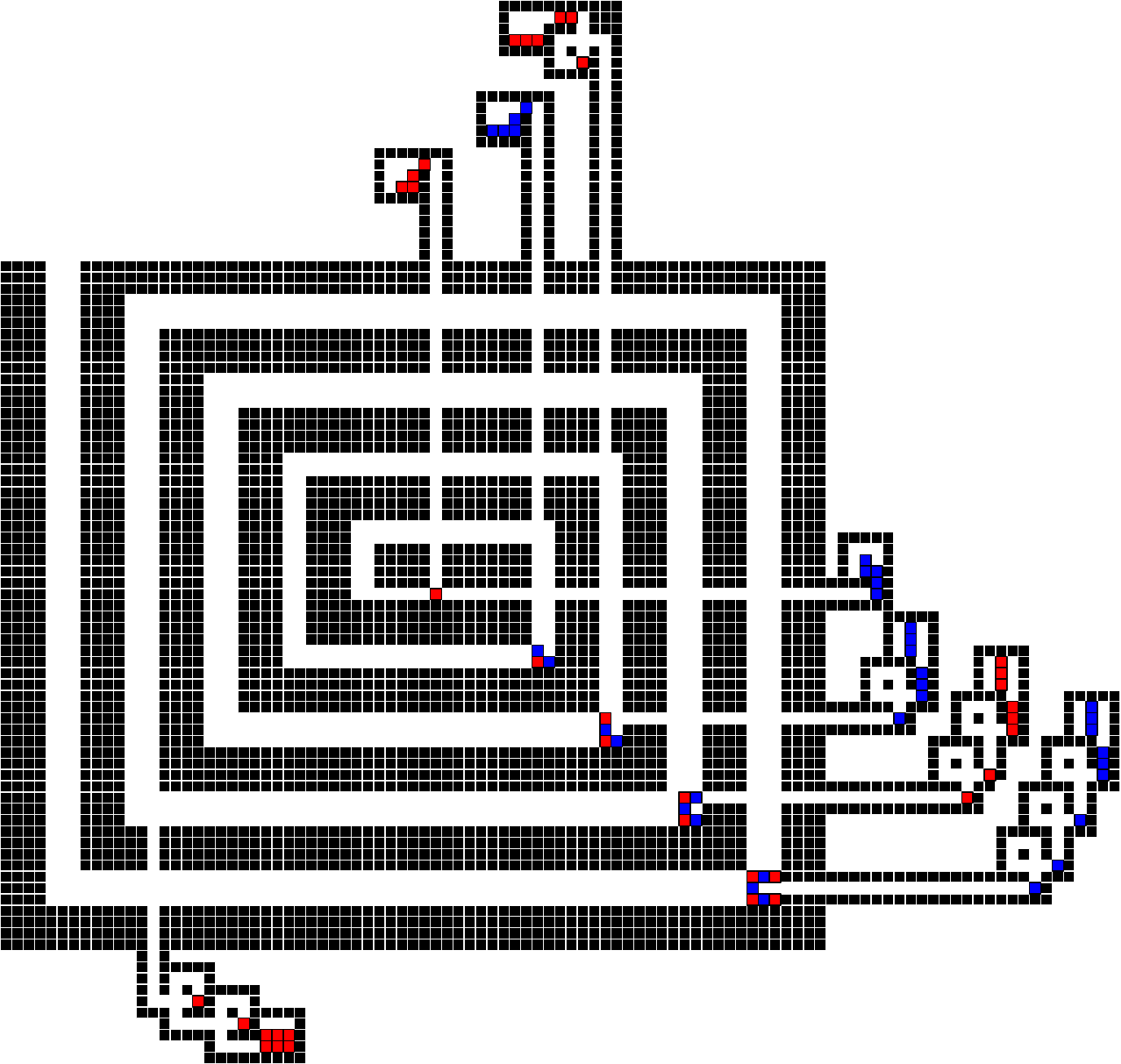}}
	\caption{(Left)	A polyomino $P$. Shown is the assembly order and the
direction of attachment to the seed (tile 0). (Right) A maze environment for pipelined construction
of the desired polyomino $P$. After the fourth cycle, each further
cycle produces a new copy of $P$.} \label{fig:parallel}
\end{figure}

\subsection{Pipelined Assembly}
Given that a construction is always possible based on adding convex corners to a partial construction,
we can argue that the idea of Manzoor et al.~\cite{manzoor2017parallel} for pipelined assembly can
be realized for {\em every} constructible polyomino:
We can transform the construction sequence into a spiral-shaped maze environment, as illustrated
in Fig.~\ref{fig:parallel}.
This allows it to produce $D$ copies of $P$ in $N+D$ cycles, implying that we only need $2N$ cycles 
for $N$ copies. 
It suffices to use a clockwise order of four unit steps (west, north, east, south)
in each cycle.

The main idea is to create a spiral in which the assemblies move from the inside to the outside.
The first tile is provided by an initial south movement.
After each cycle, ending with a south movement, the next seed tile of the next copy of $P$ is added.
For every direction corresponding to the direction of the next tile added by the sequence, 
we place a tile depot on the outside of the spiral, with a straight-line path to the location
of the corresponding attachment. 

\begin{theorem}\label{th:parallel}
	Given a construction sequence $\sigma :=\left((d_1,l_1),\dots
,(d_{N-1},l_{N-1})\right)$ that constructs a polyomino $P$, we can construct a
maze environment for pipelined tilt assembly, such that constructing $D$
copies of $P$ needs $O(N+D)$ unit steps. In particular, constructing one copy
of $P$ can be done in amortized time $O(1)$.  
\end{theorem}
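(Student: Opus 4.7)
The plan is to realize the pipelined assembly by embedding the construction sequence $\sigma$ into a spiral-shaped maze environment, mimicking the Manzoor et al.\ construction but making it general. First I would fix a clockwise control cycle of the four unit steps $\langle w, n, e, s\rangle$, and design a spiral of cells so that one full cycle of the four moves advances every partial assembly by exactly one cell outward along the spiral. The innermost cell of the spiral is where a fresh seed tile is generated at the start of each cycle (the final $s$-step of the preceding cycle places it); after its $k$-th cycle, a copy of the partial assembly consisting of the first $k$ tiles of $\sigma$ sits in the $k$-th cell of the spiral.

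Next I would install the tile depots. For each index $i\in\{1,\dots,N-1\}$ in $\sigma$, I attach to the outside of the $i$-th spiral cell a straight-line corridor aligned with direction $d_i$ and the appropriate row/column offset $l_i$, terminating in a depot that releases one tile per cycle. The corridor walls are chosen so that during the $d_i$-phase of the cycle, the released tile travels along a straight line and bonds to the partial assembly in cell $i$ in exactly the position prescribed by $(d_i,l_i)$. Since by Lemma~\ref{lemma:constr:sop:stillconstr} and Lemma~\ref{lemma:constr:sop:removeconex} the reversed sequence $\sigma$ is realizable one tile at a time in free space, the local geometry at each cell $i$ admits such a straight corridor: the direction $d_i$ is precisely an unblocked direction for attaching tile $i+1$ to the current partial assembly.

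After this geometric setup, I would check the two global invariants that make the pipeline work. \emph{Invariant 1:} in each cycle, every partial assembly present in the spiral receives exactly one new tile (its next in $\sigma$) and then shifts one cell outward; this is enforced because the four unit steps of the cycle are the same for all cells, but the depot is placed so that only the intended $d_i$-step feeds it, and the spiral walls are shaped so that the subsequent steps translate the now-$(i{+}1)$-tile assembly into cell $i+1$. \emph{Invariant 2:} distinct in-flight copies never interfere, since at any moment at most one assembly occupies each spiral cell and each depot corridor is used only by tiles destined for its specific cell. Combining these invariants, after the initial $N$ cycles needed to fill the spiral, each additional cycle ejects one completed copy of $P$ from the outermost cell, so $D$ copies are produced in $N+D$ cycles, i.e., $O(N+D)$ unit steps and amortized $O(1)$ per copy.

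The main obstacle, and the step I would spend the most care on, is the geometric realization of the spiral and its depot corridors so that Invariants~1 and~2 hold simultaneously: one must verify that the walls used to route tile $i$ from its depot to cell $i$ do not block any other depot's tile flowing during that same phase, and that the shifting of assemblies along the spiral does not collide with depot corridors of later indices. A clean way to handle this is to lay out the spiral with enough empty padding between consecutive layers so that each corridor crosses only the walls of its own cell, and to order the four phases $\langle w, n, e, s\rangle$ so that the translation phases alternate with depot-feeding phases in a consistent way relative to the spiral's orientation; the final $s$-step both translates assemblies and deposits the next seed into the innermost cell, closing the loop.
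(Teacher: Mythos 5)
Your proposal follows essentially the same route as the paper's proof: a spiral maze driven by the repeated $\langle w,n,e,s\rangle$ cycle, a seed released each cycle at the innermost point, per-tile depots attached by straight corridors aligned with $(d_i,l_i)$ on the outside of the spiral, correctness via reduction of each attachment to the single-tile (non-pipelined) situation once the assembly hits a wall, and the $N+D$-cycle count giving $O(N+D)$ unit steps and amortized $O(1)$ per copy. The only detail the paper adds that you omit is the caveat that if an added tile's latitude/longitude does not intersect the current bounding box, the insertion direction is rotated by $\pi/2$; this is minor and does not change the approach.
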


\begin{proof}
	Consider the construction sequence $\sigma$, 
	the movement sequence $\zeta$ consisting of $N$ repetitions of the cycle ($w$, $n$, $e$, $s$), 
	and an injective function $m: \sigma \rightarrow \zeta$, with $m((w,\cdot)) = e$, $m((n,\cdot)) = s$, $m(e,\cdot)) = w$ and $m((s,\cdot)) = n$.
	We also require that $m((d_i,l_i)) = \zeta_j$ if for all $i' < i$ there is a $j' < j$ with $m((d_{i'},l_{i'})) = \zeta_{j'}$ and $j$ is smallest possible.
	This implies that in each cycle there is at least one tile in $\sigma$ mapped to one direction in this cycle.
	\begin{description}
		\item[Labyrinth construction:] The main part of the labyrinth is a spiral as can be seen in Fig.~\ref{fig:parallel}.
		Consider a spiral that is making $|\zeta|$ many turns, and the innermost point $q$ of this spiral.
		From $q$ upwards, we make a lane through the spiral until we are outside the spiral.
		At this point we add a depot of tiles, such that after each south movement a new tile comes out of the depot (this can easily be done with bottleneck constructions as seen in Fig.~\ref{fig:parallel} or in~\cite{manzoor2017parallel}).
		Then, we proceed for each turn in the spiral as follows:
		For the $j$-th turn, if $m^{-1}(\zeta_j)$ is empty we do nothing.
		Else if $ m^{-1}(\zeta_j)$ is not empty we want to add the next tile.
		Let $t_i$ be this particular tile.
		Then, we construct a lane in direction $-\zeta_j$, i.e., the direction from where the tile will come from, until we are outside the spiral.
		By shifting this line in an orthogonal direction we can enforce the tile to fly in at the correct position relating to $l_i$. 
		There, we add a depot with tiles, such that the first tile comes out after $j-1$ steps and with each further cycle a new tile comes out (this can be done by using loops in the depot, see Fig.~\ref{fig:parallel} or~\cite{manzoor2017parallel}).
		Depots, which lie on the same side of the spiral, can be shifted arbitrarily, so they do not collide.
		These depots can be made arbitrarily big, and thus, we can make as many copies of $P$ as we wish.
		Note that we can make the paths in the spiral big enough, such that after every turn the bounding box of the current polyomino fits through the spiral.
		
		\item[Correctness:] We will now show that we will obtain copies of $P$.
		Consider any $j$-th turn in the spiral, where the $i$-th tile $t_i$ is going to be added to the current polyomino.
		With the next step, $t_i$ and the polyomino move in direction $\zeta_j$.
		While the polyomino does not touch the next wall in the spiral, the distance between $t_i$ and the polyomino will not decrease.
		\remove{Thus, consider the situation, when the polyomino hits the wall:
		the polyomino stops moving and $t_i$ continues moving towards the polyomino.
		This is the same situation as in our non-parallel model:}However when the polyomino hits the wall the polyomino stops moving and ti  continues moving towards the polyomino.  Wall-hitting is the same situation as in our non-parallel model:
		To a fixed polyomino we can add tiles from $n$, $e$, $s$ or $w$.
		Therefore, the tile connects to the correct place.
		Since this is true for any tile and any copy, we conclude that every polyomino we build is indeed a copy of $P$.
		
		\item[Time:] Since the spiral has at most $4N$ unit steps (or $N$ cycles), the first polyomino will be constructed after $4N$ unit steps.
		By construction, we began the second copy one cycle after beginning the first copy, the third copy one cycle after the second, and so on.
		This means, after each cycle, when the first polyomino is constructed, we obtain another copy of $P$.
		Therefore, for $D$ copies we need $N+D$ cycles (or $O(N+D)$ unit steps). For $D\in \Omega(N)$ this results in an amortized constant time construction for $P$.
	\end{description}
	
	Note that this proof only considers construction sequences in the following form:
	If a tile $t_i$ increases the side length of the bounding box of the current polyomino, then the tile is added from a direction with a longitude/latitude, such that the longitude/latitude intersects the bounding box (see Fig.~\ref{fig:parallel:bad_sequence}).
	In the case there is a tile, such that the longitude/latitude does not intersect the bounding box, then we can rotate the direction by $\frac \pi 2$ towards the polyomino and we will have a desired construction sequence.
\end{proof}

\begin{figure}
	\centering
	\resizebox{0.4\textwidth}{!}{\includegraphics{./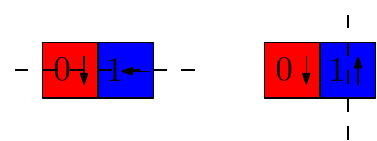}}
	\caption{Two different sequences. The red tile represents the bounding box of the current polyomino. (Left) A desired sequence. The latitude intersects the bounding box. (Right) A sequence where the latitude does not intersect the bounding box.}
	\label{fig:parallel:bad_sequence}
\end{figure}

	\section{Optimization Variants in 2D}
	\label{sec:optimization-2d}
	For polyominoes that cannot be assembled, it is natural to look for a maximum-size subpolyomino that is constructible.
	This optimization variant is polyAPX-hard, i.e., we cannot hope for an approximation algorithm with an approximation factor within $\Omega(N^{\frac 1 3})$, unless $\P =\NP$.
	
	\begin{definition}[Maximum Tilt Assembly Problem]
		Given a polyomino $P$, the Maximum Tilt Assembly Problem (\maxtap) asks for a sequence of tiles building a cardinality-maximal connected subpolyomino $P'\subseteq P$.
	\end{definition}

	\begin{theorem}
	\label{th:polyAPX}
		\maxtap is polyAPX-hard, even for tree-shaped polyominoes.
	\end{theorem}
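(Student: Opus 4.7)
The plan is to obtain the $\Omega(N^{1/3})$ inapproximability by a gap-preserving reduction from Maximum Independent Set (MIS), which by known results is NP-hard to approximate within $n^{1-\varepsilon}$ for any $\varepsilon>0$. Given a graph $G=(V,E)$ with $n=|V|$, I would construct a tree-shaped polyomino $P_G$ of total size $N=\Theta(n^{3})$ whose largest constructible subpolyomino has cardinality $\Theta(M\cdot\alpha(G))$ for a payload parameter $M=\Theta(n^{2})$, where $\alpha(G)$ is the independence number of $G$. Any $N^{1/3-\varepsilon'}$-approximation for \maxtap would then yield an $n^{1-\varepsilon}$-approximation for MIS, contradicting $\P\neq\NP$.

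The construction I have in mind is a long vertical spine to which, for each vertex $v_i$, we attach a thin horizontal arm $A_i$ terminating in a spiral-wound ``payload'' of $M$ tiles. Each payload by itself is easily constructible (all tiles can be added from the convex tip inward, as in Section~\ref{sec:constructibility-simple}). The spine and arms are spaced so that the overall dual graph remains a tree (no $2\times 2$ subsquare appears). For every edge $\{v_i,v_j\}\in E$, I would insert a local \emph{conflict gadget}: two strategically placed ``witness'' tiles on $A_i$ and $A_j$ whose approach corridors coincide, so that once both arms are fully built, the last tile needed to complete either payload is blocked from every side. Singleton arms remain buildable because, in the absence of the partner arm, the approach corridor is clear.

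The proof would then proceed in four steps. (i) Show that for any independent set $I\subseteq V$, the subpolyomino consisting of the spine, the arms $\{A_i : v_i\in I\}$, and their payloads is constructible, by exhibiting an explicit construction sequence (inductively building the spine outward and then feeding each arm from its tip). (ii) Show the converse: any constructible subpolyomino containing two arms whose vertices are adjacent in $G$ must omit at least one full payload, because the key tile there is blocked in every direction. Therefore every feasible solution that uses more than $M$ tiles from the payload layer corresponds, up to $O(n^{2})$ lower-order tiles from the spine and conflict gadgets, to an independent set of $G$. (iii) Deduce $\mathrm{OPT}(\maxtap,P_G) = M\cdot\alpha(G) + O(n^{2})$, and (iv) conclude by the standard gap-preserving calculation with $N=\Theta(n\cdot M) = \Theta(n^{3})$, translating MIS's $n^{1-\varepsilon}$ hardness into an $\Omega(N^{1/3-\varepsilon/3})$ hardness for \maxtap. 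Since each arm can be made an isolated path and conflict gadgets are attached only via single tile edges, $P_G$ is tree-shaped.

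The hard part will be the geometric realization of the conflict gadgets under the thinness constraint imposed by tree-shapedness: without any $2\times 2$ block, the usual trick of surrounding a tile on three sides is unavailable, so blocking must be achieved solely through the \emph{approach-direction} rule of construction steps. The delicate verification is that conflict gadgets for different edges do not interfere with each other (so that independent sets are genuinely realizable) and that no unintended blocking is introduced between non-adjacent vertices. I would handle this by laying out the arms on widely separated vertical tracks, routing each conflict gadget in its own horizontal band, and making $M$ large enough that the small $O(|E|)$ additional tiles from the gadgets are absorbed into lower-order terms of the final gap calculation.
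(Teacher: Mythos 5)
Your overall strategy is exactly the paper's: reduce from MIS, give each vertex a payload of $\Theta(n^2)$ tiles hanging off a spine (the paper uses a horizontal line with a $30n^2$-tile vertical strip per vertex), enforce pairwise conflicts for edges via small gadgets, keep everything tree-shaped, and run the same $N=\Theta(n^3)$ gap calculation to get $\Omega(N^{1/3})$-inapproximability. So the plan is sound in outline, but it has one genuine gap, and it sits precisely where you yourself flag ``the hard part'': the conflict gadget is never constructed, and the behavior you specify for it would not suffice. You say the gadget should make ``the last tile needed to complete either payload'' blocked once both arms are present. If what is blocked is literally one tile of each payload, the reduction collapses: a solution can simply omit that single tile and still collect $M-1$ tiles from \emph{every} payload, so $\mathrm{OPT}(\maxtap)$ would be $\Theta(Mn)$ regardless of $\alpha(G)$ and no gap survives. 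What the reduction needs (and what the paper's 19-tile selector gadgets, placed between the spine and the strips, achieve) is that the blocked tile is a \emph{cut} tile on the access path to the payload, so that losing it forfeits the entire $M$-tile strip; moreover this must hold for every construction order (building payload $i$ in any interleaving with arm/payload $j$), while each payload alone, and any independent family of payloads, remains constructible. Your step (ii) implicitly assumes this stronger property (``must omit at least one full payload''), but nothing in the gadget description delivers it, and without an explicit geometric gadget plus an order-independent blocking argument the key equivalence $\mathrm{OPT}\approx M\cdot\alpha(G)$ is unproven.

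Two smaller remarks. First, your worry that tree-shapedness makes blocking hard because ``surrounding a tile on three sides is unavailable'' is misplaced: in this model a position $p$ is blocked by \emph{any} tiles in its blocking sets $N_p,E_p,S_p,W_p$, which extend arbitrarily far along the three adjacent columns/rows, so long-range blocking between thin, widely separated structures is exactly what the paper's tree-shaped construction exploits. Second, in the final accounting you should argue not only that two adjacent full payloads cannot coexist, but that from any feasible \maxtap solution with $T$ tiles one can extract an independent set of size roughly $T/M$ (partial strips and gadget tiles contribute only the additive $O(n^2)$ term); this is the direction actually used in the approximation-transfer step, and it again hinges on the access-blocking (all-or-nothing) property of the gadget.
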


		\begin{proof}
			We reduce \textsc{Maximum Independent Set} (MIS) to \maxtap; see Fig.~\ref{fig:MIS} for an illustration.
			Consider an instance $G=(V,E)$ of MIS, which we transform into a polyomino $P_G$.
			We construct $P_G$ as follows.
			Firstly, construct a horizontal line from which we go down to select which vertex in $G$ will be chosen. 
			The line must have length $10n-9$, where $n = |V|$.
			Every 10th tile will represent a vertex, starting with the first tile on the line.
			Let $t_i$ be such a tile representing vertex $v_i$.
			For every $v_i$ we add a selector gadget below $t_i$ and for every $\{v_i,v_j\}\in\delta(v_i)$ we add a reflected selector gadget below $t_j$, as shown in Fig.~\ref{fig:MIS}, \revision{each consisting of 19 tiles}.
			Note that all gadgets for selecting vertex $v_i$ are above the gadgets of $v_j$ if $i < j$ \revision{and that there are at most $n^2$ such gadgets}.
			After all gadgets have been constructed, \revision{we have already placed at most $19n^2 + 10n-9\leq 29n^2$ tiles}. We continue with a vertical line with a length of $30n^2$ tiles. 
			\begin{figure}[h!]
				\centering
				\resizebox{0.4\columnwidth}{!}{\includegraphics{./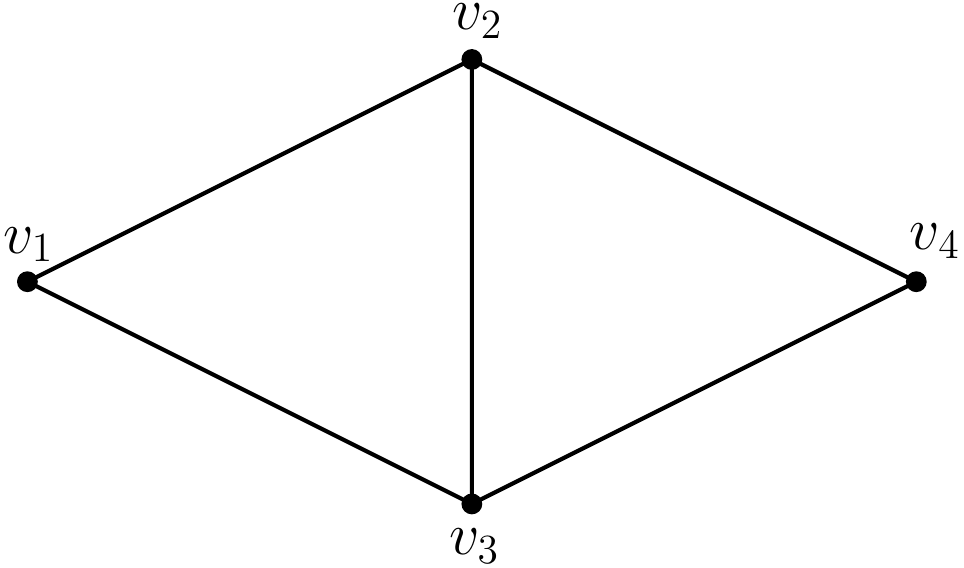}}
				\hfill
				\resizebox{0.49\columnwidth}{!}{\includegraphics{./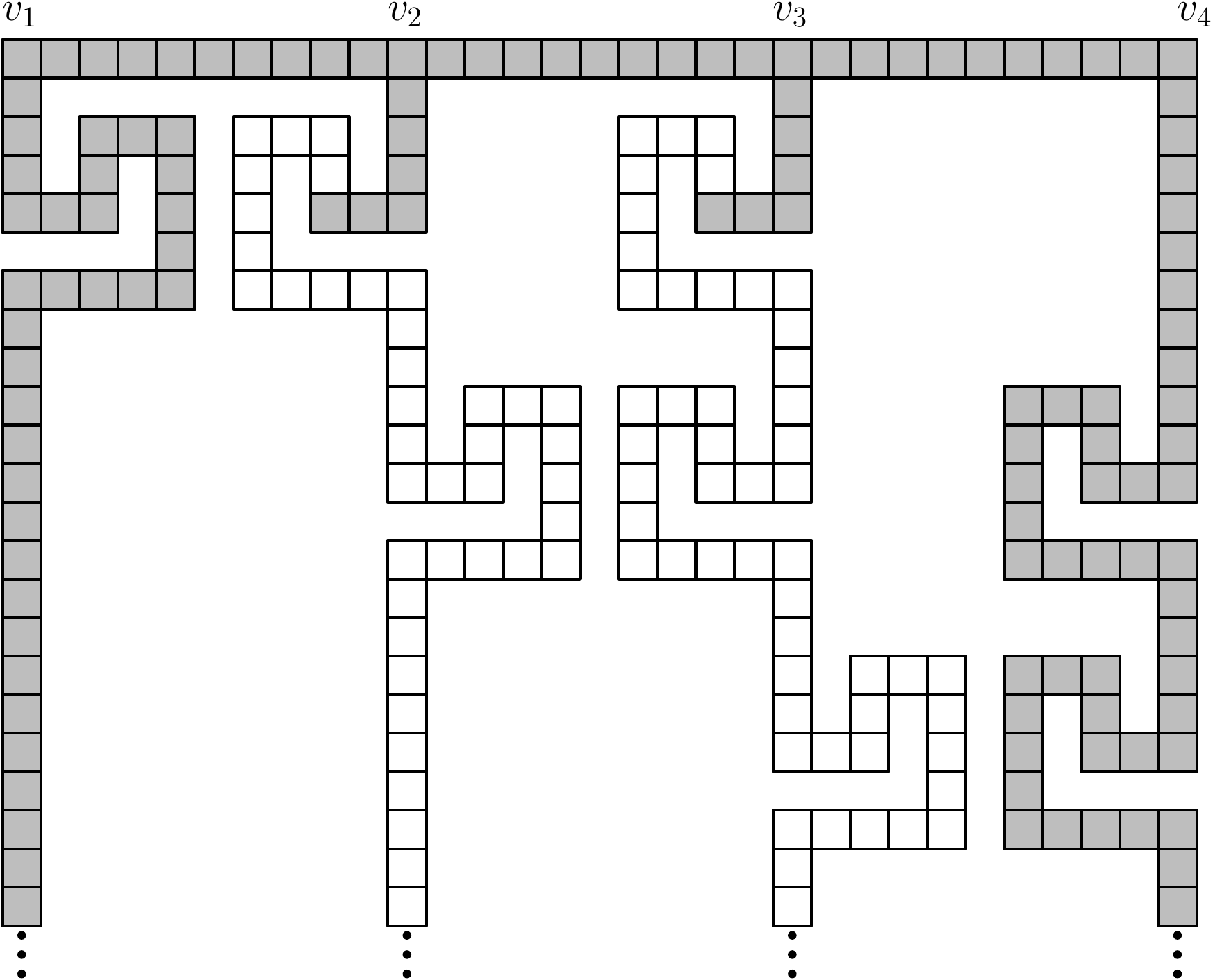}}
				\caption{Reduction from MIS to \maxtap. (Left) A graph $G$ with four vertices. (Right) A polyomino constructed for the reduction with a feasible, maximum solution marked in grey.}
				\label{fig:MIS}
			\end{figure}
			
			Now let $\alpha^*$ be an optimal solution to MIS.
			Then \maxtap has a maximum polyomino of size at least $30n^2\alpha^*$ and at most $30n^2\alpha^* + 29n^2$:
			We take the complete vertical part of $t_i$ for every $v_i$ in the optimal solution of MIS.
			Choosing other lines block the assembly of further lines and thus, yields a smaller solution.
			
			Now suppose we had an $N^{1-\varepsilon}$-approximation for \maxtap.
			Then we would have a solution of at least $\frac 1 {N^{1-\varepsilon}} T^*$, where $T^*$ is the optimal solution.
			We know that \revision{an optimal solution} has $T^*\geq 30n^2\alpha^*$ tiles and \revision{the polyomino has at most} $N \leq 30n^3 + 29n^2\leq 59n^3$ tiles.
			Therefore, we have at least  $\frac {30n^2\alpha^*}{59^{1-\varepsilon}n^{3-3\varepsilon}}$ tiles
			and thus at least
			$\frac {1}{59^{1-\varepsilon}n^{3-3\varepsilon}}\alpha ^*$ strips,
			because each strips is $30n^2$ tiles long.
			Consider some $\varepsilon \geq \frac 2 3 + \eta$ for any $\eta > 0$, then the number of strips is
			$\frac {1}{59^{1/3}n^{1-3\eta}}\alpha ^*$ which results in an $n^{1-\delta}$-approximation for MIS, contradicting the inapproximability of MIS (unless \P=\NP) \revision{shown by  Berman and Schnitger~\cite{berman1992complexity}}.
		\end{proof}

		As a consequence of the construction, we get Corollary~\ref{cor:13}.

	\begin{corollary}
	\label{cor:13}
		Unless $P=NP$, \maxtap cannot be approximated within a factor of $\Omega(N^{\frac 1 3})$.
	\end{corollary}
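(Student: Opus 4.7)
The plan is to derive this corollary as an immediate consequence of the parameter analysis already carried out in the proof of Theorem~\ref{th:polyAPX}; no fresh reduction is needed. The reduction there takes a graph $G$ on $n$ vertices to a tree-shaped polyomino $P_G$ with $N \leq 59n^3$ tiles whose MaxTAP optimum $T^*$ satisfies $30n^2\alpha^* \leq T^* \leq 30n^2\alpha^* + 29n^2$, where $\alpha^*$ is the cardinality of a maximum independent set in $G$. From $N = O(n^3)$ we have $n = \Omega(N^{1/3})$, which is exactly the source of the exponent $1/3$ in the corollary.

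The first step is to interpret the claim in the standard asymptotic sense: an $\Omega(N^{1/3})$-inapproximability statement means that for every constant $\eta > 0$, no polynomial-time algorithm can guarantee an $N^{1/3-\eta}$-approximation unless $\P = \NP$. I would then argue by contradiction, fixing some $\eta > 0$ and assuming that an $N^{1/3-\eta}$-approximation $\mathcal{A}$ for \maxtap exists. Running $\mathcal{A}$ on $P_G$ and rounding the number of output tiles down to a multiple of the strip length $30n^2$ produces a set of disjoint vertical strips that, by construction of $P_G$, corresponds to an independent set in $G$.

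The second step is a direct substitution: the size estimate on $P_G$ gives at least $\tfrac{30n^2\alpha^*}{(59n^3)^{1/3-\eta}}$ tiles after approximation, hence at least $\tfrac{1}{59^{1/3-\eta}}\,n^{3\eta}\,\alpha^* \cdot \tfrac{1}{30n^2} \cdot 30n^2$ whole strips (up to absorbed constants), i.e., an independent set of size $\Omega(\alpha^*/n^{1-3\eta})$. This is an $n^{1-\delta}$-approximation for MIS with $\delta = 3\eta$, contradicting the inapproximability of MIS due to Berman and Schnitger~\cite{berman1992complexity}.

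The only technical wrinkle, which I expect to be the main bookkeeping hurdle rather than a genuine obstacle, is to verify that the additive scaffolding of $29n^2$ tiles in $T^*$ and the floor operation when converting returned tiles to complete strips do not erase the $\eta$ margin. Because the scaffolding contributes $O(n^2) = O(N^{2/3})$, which is of strictly smaller polynomial order than the per-strip contribution $30n^2 \cdot \alpha^* = \Omega(n^2 \cdot \alpha^*)$, these losses vanish into the constants for all sufficiently large $n$; one simply picks $\eta$ slightly smaller than the target slack to absorb them. I would present the corollary with this short argument, emphasizing that the exponent $1/3$ is exactly the cube-root blow-up from $n$-vertex graphs to $O(n^3)$-tile polyominoes built by the reduction.
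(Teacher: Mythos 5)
Your proposal is correct and follows essentially the same route as the paper: the corollary is obtained there as an immediate consequence of the parametrized analysis in the proof of Theorem~\ref{th:polyAPX} (an $N^{1-\varepsilon}$-approximation with $\varepsilon \geq \tfrac{2}{3}+\eta$ yields an $n^{1-\delta}$-approximation for MIS via the same $N \leq 59n^3$, $T^* \geq 30n^2\alpha^*$ bookkeeping), which is exactly your argument. Your extra care about the $29n^2$ scaffolding and the floor when counting strips is a harmless refinement of the same calculation.
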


	On the positive side, we can give an $O(\sqrt{N})$-approximation algorithm.

	\begin{theorem}
		The longest constructible path in a tree-shaped polyomino $P$ is a $\sqrt{N}$-approximation for \maxtap, and we can find such a path in polynomial time.
	\end{theorem}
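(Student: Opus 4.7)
The plan is to exhibit a constructible path in $P$ of length at least $\sqrt{N}$, which together with $|P^*|\le N$ immediately yields the $\sqrt{N}$-approximation, and to match this with a simple enumeration algorithm.

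I would first use a geometric argument. Because $P$ is a tree polyomino embedded in $\mathbb{Z}^2$ with $N$ tiles, its bounding box has area at least $N$ and hence side length at least $\sqrt{N}$ in the longer dimension; WLOG $A = x_{\max}-x_{\min}+1 \ge \sqrt N$. Take tiles $u,v$ with $x(u) = x_{\min}$ and $x(v) = x_{\max}$ and let $L$ be the unique tree path from $u$ to $v$ in $P$; then $|L| \ge A \ge \sqrt{N}$, since each step of $L$ changes the $x$-coordinate by at most one. Because the dual graph of $P$ is a tree, the induced subgraph on the tiles of $L$ is itself a path; in particular $L$ is a \emph{snake} in the sense that no two non-consecutive tiles of $L$ are grid-adjacent.

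Next, the key technical step is to show that every snake is constructible. Since a snake is simply connected, the $O(N\log N)$ decidability procedure from Section~\ref{sec:constructibility-simple} applies, and by Lemmas~\ref{lemma:constr:sop:stillconstr} and \ref{lemma:constr:sop:removeconex} correctness reduces to showing that at every stage of the decomposition one can find a convex non-cut tile. In a snake the non-cut tiles are exactly the two endpoints, so it suffices to argue that at each stage at least one endpoint is convex and unblocked. The key observation is that the lex-smallest tile of the current snake always has an empty west blocking set and a convex SW $2\times 2$ square; if this extremal tile is an endpoint, it is removable, and otherwise one follows the snake along one of its two branches towards an endpoint and argues that extremality transfers to that endpoint. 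Intuitively, if neither endpoint were extremal in any cardinal direction, the snake would have to wind around both endpoints in order to place snake tiles at all four extremes, forcing a self-touching and contradicting the snake property. Iterating this removal gives a full decomposition whose reverse is a construction sequence for $L$. I expect that making the ``extremality transfer'' step fully rigorous will be the main obstacle; the cleanest route is probably induction on $|L|$, removing a single convex endpoint identified by the extremal argument and invoking the inductive hypothesis on the remaining snake.

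Finally, for the polynomial-time algorithm, enumerate all $O(N^2)$ pairs of tiles $(u,v) \in P$, compute the unique tree path between them in $O(N)$, and test its constructibility in $O(N\log N)$ using the algorithm for simple polyominoes; return the longest constructible path found. The total running time is $O(N^3 \log N)$. Combined with the snake argument this yields a constructible path of length at least $\sqrt{N} \ge |P^*|/\sqrt{N}$, establishing the claimed $\sqrt{N}$-approximation.
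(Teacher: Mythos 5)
There is a genuine gap: your key technical claim --- that every snake (induced path in a tree-shaped polyomino) is constructible --- is false, and the lower bound collapses with it. A counterexample is a \emph{double spiral}: two spirals, each winding twice around its own center with one empty cell between consecutive windings, joined at their outer ends by a path. Its dual graph is a path, so it is a snake; but the only tiles whose removal preserves connectivity are the two endpoints, and each endpoint is the center of a spiral and hence blocked in all four directions by its own windings. No decomposition step is possible, so this snake is not constructible. This is exactly where your ``extremality transfer'' intuition fails: a snake can wind around both of its endpoints \emph{without} self-touching, because blocking does not require adjacency --- a tile arbitrarily far away in the same or an adjacent column/row still blocks. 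Consequently the path $L$ you extract from the bounding box of $P$ need not be constructible ($P$ itself need not be constructible), and you cannot conclude that the longest constructible path has length at least $\sqrt N$.

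The repair is to take the spanning path inside the \emph{optimal solution} rather than inside $P$, which is what the paper does: let $P^*$ be an optimal (hence constructible) subpolyomino, let $B$ be its smallest enclosing box, and let $S$ be the tree path in $P^*$ between two tiles on opposite sides of $B$; then $|S|\ge\sqrt{|P^*|}$, which still suffices because $|P^*|/\sqrt{|P^*|}=\sqrt{|P^*|}\le\sqrt N$. Constructibility of $S$ now actually follows: restrict a decomposition sequence of $P^*$ to the tiles of $S$. Since $P^*$ is tree-shaped, an interior tile of $S$ with both path-neighbors still present is a cut tile of the current polyomino and can never be the tile removed, so the restricted order always peels an endpoint of the remaining subpath, and being unblocked is inherited by subsets. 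Your algorithmic part (enumerate all $O(N^2)$ tile pairs, test each unique tree path for constructibility, keep the longest) is fine and matches the paper's; only the existence argument for a long \emph{constructible} path needs to be rerouted through $P^*$ as above.
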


	\begin{proof}
		Consider an optimal solution $P^*$ and a smallest enclosing box $B$ containing $P^*$.
		Then there must be two opposite sides of $B$ having at least one tile of $P^*$.
		Consider the path $S$ between both tiles.
		Because the area $A_B$ of $B$ is at least the number of tiles in $P^*$, $|S| \geq \sqrt{A_B}$ and a longest, constructible path in $P$ has length at least $|S|$, we conclude that the longest constructible path is a $\sqrt N$-approximation.

		To find such a path, we can search for every path between two tiles, check whether we can build this path, and take the longest, constructible path.
	\end{proof}

	Checking constructibility for $O(N^2)$ possible paths is rather expensive.
	However, we can efficiently approximate the longest constructible path in a tree-shaped polyomino \revision{with the help of \emph{sequentially} constructible paths, i.e., the initial tile is a leaf in the final path}.
	\begin{theorem}\label{th:sequential:thin}
		We can find a constructible path in a tree-shaped polyomino in $O(N^2 \log N)$ time that has a length of at least half the length of the longest constructible path.
	\end{theorem}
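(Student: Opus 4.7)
The plan is to reduce the problem to, for each possible seed, computing the longest \emph{sequentially} constructible path from that seed, and to return the overall best candidate.

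\textbf{Approximation guarantee.} Let $P^*$ be a longest constructible path of length $L$, built from some seed tile $s^*$. Since $P^*$ is a simple path in the tree-shaped polyomino, $s^*$ splits $P^*$ into (at most) two subpaths $P_1^*$ and $P_2^*$ sharing only $s^*$; the larger, say $P_1^*$, has at least $\lceil L/2 \rceil$ tiles. I claim $P_1^*$ is sequentially constructible from $s^*$: take the construction sequence of $P^*$ and discard every step that places a tile of $P_2^*\setminus\{s^*\}$. Each surviving step still succeeds because (i) removing tiles only shrinks the blocking sets, so the incoming direction remains clear, and (ii) since $P$ is tree-shaped, two tiles of $P^*$ are $\mathbb{Z}^2$-adjacent iff they are path-neighbors, forcing the current assembly to always be a connected prefix of $P_1^*$ starting at $s^*$, so the newly placed tile correctly extends this prefix by one.

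\textbf{Algorithm for a fixed seed.} For a fixed seed $s$, I would compute the longest sequentially constructible path from $s$ by a DFS on the tree-shaped polyomino rooted at $s$. I would maintain the current assembly inside the row- and column-indexed balanced search trees used in the proof of the previous theorem, so that inserting a tile, deleting a tile, and testing whether a candidate position is blocked from all four sides each cost $O(\log N)$. At each DFS node $u$ the algorithm iterates over the $O(1)$ tree-children of $u$; for each child $w$ it tests whether $w$ is unblocked given the current assembly, and if so inserts $w$, recurses, deletes $w$ on return, and tracks the greatest DFS depth reached. Since each of the $N$ tiles is inserted and deleted at most once per seed, a single run of the DFS finishes in $O(N\log N)$ time.

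\textbf{Combining.} Running the DFS from each of the $N$ possible seeds and returning the overall longest sequentially constructible path yields total running time $O(N^2\log N)$. Applying the approximation argument with $s = s^*$ shows the output has length at least $L/2$, as claimed. The main obstacle is the approximation step rather than the algorithm itself: one must carefully justify that deleting the $P_2^*$-side steps from the construction sequence of $P^*$ still yields a valid sequential construction of $P_1^*$, which crucially relies on the tree-shape of the polyomino so that the evolving partial assembly is forced to be a connected prefix of $P_1^*$ anchored at $s^*$.
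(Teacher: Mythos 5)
Your proposal is correct and follows essentially the same route as the paper: restrict attention to sequentially constructible paths (losing at most a factor of two), and for each of the $N$ possible start tiles run a DFS over the tree while maintaining the row/column search trees to test blocking in $O(\log N)$ per step, pruning below any blocked tile, for $O(N\log N)$ per seed and $O(N^2\log N)$ overall. The only difference is that where the paper dismisses the factor-2 claim with ``clearly,'' you supply an explicit (and valid) argument that the longer half of an optimal path, split at its seed, remains sequentially constructible, using the tree-shape to ensure each tile still stops at its intended position.
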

	\begin{proof}
		We only search for paths that can be built sequentially.
		Clearly, the longest such path is at least half as long as the longest path that can have its initial tile anywhere.
		We use the same search tree technique as before to look for blocking tiles.
		Select a tile of the polyomino as the initial tile.
		Do a depth-first search and for every tile in this search, check if it can be added to the path.
		If it cannot be added, skip all deeper tiles, as they also cannot be added.
		During every step in the depth-first search, we only need to change a single tile in the search trees, doing $O(1)$ updates with $O(\log N)$ cost.
		As we only consider $O(N)$ vertices in the depth-first search, this results in a cost of $O(N \log N)$ for a fixed start tile.
		It is trivial to keep track of the longest such constructible path.
		Repeating this for every tile results in a running time of $O(N^2 \log N)$.
	\end{proof}
	%\todo[inline]{ Can we do some dynamic programming to find the longest path more efficiently?}
	
	In tree-shaped polyominoes, finding a constructible path is easy. 
	For simple polyominoes, additional arguments and data structures lead to a similar result.
	
	\begin{theorem}
	\label{th:N2}
	\revision{In simple polyominoes, finding the longest of all shortest paths that are sequentially constructible takes $O(N^2\log N)$ time.}
	
	\remove{The longest of all shortest paths that are sequentially constructible can be found in $O(N^2\log N)$ time.}
	\end{theorem}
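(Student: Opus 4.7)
The plan is to adapt the algorithm of Theorem~\ref{th:sequential:thin} from trees to simple polyominoes. I would iterate over all $O(N)$ candidate starting tiles $s$; for each $s$, I spend $O(N\log N)$ time to determine the longest sequentially constructible path from $s$ that is also a shortest path in the dual graph of $P$. The overall bound is then $O(N^2\log N)$, matching the tree case.

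For a fixed $s$, first I would run a BFS in the dual graph of $P$ to compute the graph distance $d(v)$ from $s$ to every tile $v$; this takes $O(N)$ time. A sequentially constructible shortest path from $s$ to $v$ corresponds to a directed path from $s$ to $v$ in the level-DAG $D_s$ whose edges go from tiles at level $i$ to grid-neighbors at level $i+1$, with the additional property that tiles can be attached in this order without any of them being blocked. I would then explore $D_s$ by a DFS that, just as in the tree case, maintains the tiles currently on the stack in the family of $O(N)$ balanced search trees described in the algorithm for Theorem~3. On each push I insert the new tile in $O(\log N)$ time and check whether it is blocked by the current stack in $O(\log N)$ time; on each pop I delete it. Whenever the DFS reaches a tile $v$ whose depth exceeds the best value so far, I update the answer, and I can reconstruct the witnessing path from the DFS stack.

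To get the per-start bound of $O(N\log N)$ rather than an exponential one (since $D_s$ may contain exponentially many distinct $s$-to-$v$ paths), I would prune the DFS so that each tile is descended into at most once per start: as soon as one constructible shortest path reaching $v$ has been established, any DFS branch that would revisit $v$ through a different predecessor is cut off. With this pruning the DFS performs $O(N)$ pushes and pops per start, for a total of $O(N\log N)$ work per start and $O(N^2\log N)$ overall.

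The main obstacle, and the only place where hole-freeness of $P$ enters, is justifying the pruning: whether an extension $w$ of a tile $v$ at level $d(v)+1$ is blocked depends on the specific length-$d(v)$ path currently on the stack, not only on $v$, so a priori two different constructible shortest paths to $v$ might enable different continuations. The key structural lemma I would prove is that in a simple polyomino it suffices to remember, for each $v$, one constructible shortest path reaching it; intuitively, because the complement is connected, the tiles of any such path lie on one side of the blocking "cones" of any future candidate $w$, so the specific routing from $s$ to $v$ does not affect whether $w$ can be attached. Establishing this invariant is the core technical step; once it is in place, the data-structural part of the proof is essentially the same bookkeeping as in Theorem~\ref{th:sequential:thin}.
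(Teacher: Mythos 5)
Your algorithmic skeleton (iterate over all $N$ start tiles, BFS in the dual graph, then a DFS restricted to shortest paths using the $O(\log N)$ blocking look-ups from the search-tree structure) matches the paper's proof, and the $O(N^2\log N)$ bookkeeping is fine. The genuine gap is the step you explicitly defer: the ``key structural lemma'' that remembering one constructible shortest path per tile suffices, which is what makes pruning the DFS on the shortest-path DAG sound. That lemma is where the entire mathematical content of the theorem sits, and you do not prove it; the intuition you offer (``the tiles of any such path lie on one side of the blocking cones of $w$ because the complement is connected'') is not an argument, since whether a candidate $w$ is blocked depends on all tiles of the prefix currently placed, and a priori two different constructible shortest paths to $v$ could admit different continuations. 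The paper establishes exactly this fact beforehand: Lemma~\ref{lem:sequential} shows that a subpath of a sequentially constructible path can be replaced by a straight shortcut, because two blocking structures $A$ and $B$ obstructing the shortcut would have to be connected around it and would then already block the original subpath --- impossible in a hole-free polyomino; Lemma~\ref{lem:shortest} then shows that between two fixed tiles either every shortest path is sequentially constructible or none is, by decomposing shortest paths into monotone pieces inside common bounding boxes and reusing the same blocking argument. Without these (or an equivalent exchange argument using simplicity), neither the correctness of cutting off revisits nor the negative case --- a branch reaches $v$ but finds it blocked, and you must argue no other shortest path to $v$ needs to be tried --- is justified.

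Once the equivalence of shortest paths is available, note that the paper also takes a structurally simpler route than your DAG with memoization: by Lemma~\ref{lem:shortest} it suffices to test, for each start tile, the unique path in its BFS tree, so the tree algorithm of Theorem~\ref{th:sequential:thin} applies verbatim to the BFS tree and no pruning invariant is needed at all. Your DAG formulation can be made to work with the same lemma, but it adds complications (multiple arrivals per tile, retries after blocked arrivals) that the BFS-tree reduction avoids.
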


\newlyrevised{Before we start with the proof of Theorem~\ref{th:N2}, we show in the next two lemmas that it is sufficient to consider shortest paths only, and that we can restrict ourselves to one specific shortest path between two tiles. Hence, we just need to test a maximum of $O(n^2)$ different paths.}

	\begin{lemma}
		\label{lem:sequential}
		In a sequentially constructible path, if there is a direct straight connection for a subpath, the subpath can be replaced by the straight connection.
	\end{lemma}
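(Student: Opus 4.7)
The plan is to exhibit an explicit sequential construction order for the modified path and verify its feasibility. Let $\pi=(t_0,\dots,t_n)$ be the given sequentially constructible path, let $a=t_i$ and $b=t_j$ with $i<j$, and let $L=(s_0=a,s_1,\dots,s_k=b)$ be the direct straight connection in $P$. Without loss of generality, I would assume $L$ runs east, so $s_m=a+m\,\vec{e}$; the other three orientations are handled by rotational symmetry. The modified path is $\pi'=(t_0,\dots,t_{i-1},s_0,s_1,\dots,s_k,t_{j+1},\dots,t_n)$, and the goal is to produce a construction sequence for $\pi'$ from the one given for $\pi$.

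I would build $\pi'$ in three phases. First, reuse the original construction sequence for its first $i+1$ steps to assemble $t_0,t_1,\dots,t_i=a$; these steps are completely unchanged and so remain valid. Second, extend the partial polyomino along $L$ by adding $s_1,s_2,\dots,s_k$ in this order, each tile slid in from the east. Third, append the tiles $t_{j+1},\dots,t_n$ by reusing the corresponding later steps of the original sequence. Phases one and three are justified directly by inheriting steps from the original construction; the nontrivial content lies in phase two.

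The main obstacle is phase two: showing that for each $m=1,\dots,k$ a tile sliding in from east indeed comes to rest at $s_m$. When we are about to add $s_m$, the current partial polyomino consists of $\{t_0,\dots,t_i\}\cup\{s_1,\dots,s_{m-1}\}$. Since $s_{m-1}$ (or $a$ when $m=1$) is already present in row $a_y$ and lies immediately west of $s_m$, the incoming tile does reach a position adjacent to an existing tile upon arriving at $s_m$. The delicate point is to exclude that it stops earlier, which could happen only if some tile $t_\ell\in\{t_0,\dots,t_{i-1}\}$ lies in row $a_y$ or in one of the two adjacent rows $a_y\pm 1$ strictly east of $s_m$. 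I would rule this out by a case analysis using that $\pi$ is a simple path and $L\subseteq P$: any such $t_\ell$ would either force $\pi$ to cross $L$ (contradicting simplicity of $\pi'$ in the nontrivial cases), or it would already have blocked the first construction step inside the original $a$-to-$b$ subpath, contradicting the assumed sequential constructibility of $\pi$. In the residual cases where east-sliding is nonetheless blocked, I would fall back on sliding $s_m$ from a perpendicular direction (north or south), using that the region immediately flanking $L$ must be free on at least one side, since the original subpath from $a$ to $b$ was constructible and $L$ itself is a row of $P$.

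With phase two established, phase three proceeds because replacing the original winding subpath by the row $L$ only removes tiles from columns other than those of $L$ and only adds tiles inside a single row; any slide used later in the original sequence for some $t_\ell$ with $\ell>j$ either is unaffected, or its stopping condition is preserved because $L$ cannot newly obstruct a slide that already worked with the (generally larger and more spread-out) original subpath in place. Combining the three phases yields a valid construction sequence for $\pi'$, proving the lemma.
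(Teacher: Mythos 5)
There is a genuine gap, and it sits exactly at the point you yourself flag as delicate: phase two. Your case dichotomy for ruling out an early stop of the east-slide is not exhaustive. A prefix tile $t_\ell$ may lie in row $a_y\pm 1$ strictly east of $s_m$ (even east of $b$) without the path ever crossing $L$ and without obstructing the first step of the original subpath $W'$, since $W'$ may leave $a$ in a perpendicular direction and wind around such a tile; in that situation your prescribed slide stops at a wrong position, so neither of your two cases applies. You then retreat to the fallback claim that ``the region immediately flanking $L$ must be free on at least one side,'' but that assertion is essentially the statement of the lemma and is left unproven; it does not follow merely from the constructibility of $W'$ and from $L\subseteq P$. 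The missing idea, which is how the paper closes this hole, is a blocking/connectivity argument: if some tile of $L$ could not be inserted from \emph{any} direction, there would have to be blocking structures $A$ and $B$ on both sides of $L$; these blockers belong to the already-built connected prefix, hence are joined by a path, and whichever way that connection runs it would already have blocked either the last tile $p_0$ of $L$ (the endpoint of $W'$) or the first tile $p_1$ of $W'$ in the original construction, contradicting the sequential constructibility of $W$. Without an argument of this kind (fixing a single insertion direction per tile and hand-waving the residual cases), the proof does not go through.

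A secondary weak point is phase three: $L$ generally contains cells that $W'$ did not occupy, so the claim that $L$ ``cannot newly obstruct a slide that already worked'' is not automatic — the replacement adds tiles at new positions, which can change the stopping position of a later insertion. This also needs justification (or the later tiles' insertions must be re-derived, e.g., again via the blocking argument), rather than an appeal to the original subpath being ``larger and more spread-out.''
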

	
	\begin{figure}[h!]
		\centering
		\resizebox{0.3\columnwidth}{!}{\includegraphics{./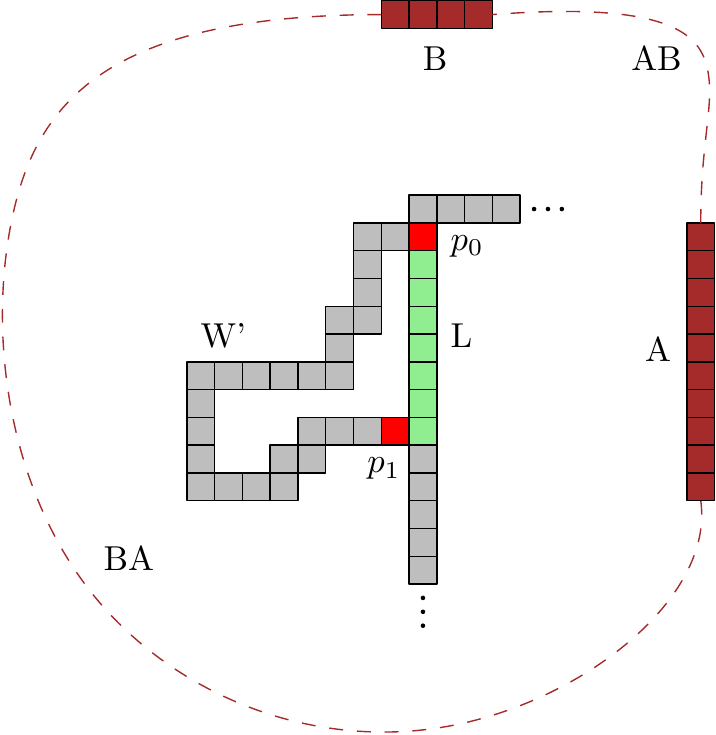}}
		\caption{A subpath $W'$ and its shortcut $L$ in green. To block $L$, $A$ and $B$ must exist. But then, either $p_0$ or $p_1$ (red tiles) will also be blocked. Therefore, also $W'$ cannot be built.}
		\label{fig:sequential}
	\end{figure}
	
	\begin{proof}
		Consider a sequentially constructible path $W$ and a subpath \remove{$W'$}\newlyrevised{$W'\subset W$} that  has a straight line $L$ connecting the startpoint and \newlyrevised{the} endpoint of $W'$.
		W.l.o.g., $L$ is a vertical line and we build from bottom to top.
		Assume that $(W\backslash W')\cup L$ is not constructible.
		Then at least two structures \revision{(which can be single tiles)} $A$ and $B$ must exist, preventing us from building $L$. Furthermore, these structures have to be connected via a path ($AB$ or $BA$, see Fig.~\ref{fig:sequential}).
		We observe that none of these connections can exist or otherwise, we cannot build $W$ \remove{(with $AB$ we cannot build the endpoint of $L$; with $BA$ we cannot build the first point of $W'$ to the left of $L$)}\newlyrevised{(if $AB$ exist, we cannot build the last tile $p_0$ of $L$; if $BA$ exist, we cannot build the first tile $p_1$ of $W'$)}. 
		Therefore, we can replace $W'$ with $L$.
	\end{proof}
	
	\revision{By repeating the construction of Lemma~\ref{lem:sequential} we get a shortest path from tile $t_1$ to $t_2$ in the following form:
		Let $P_1,\dots,P_k$ be reflex tiles on the path from $t_1$ to $t_2$. Furthermore, for every $1\leq i\leq k-1$, the path from $P_i$ to $P_{i+1}$ is monotone.
		This property holds for every shortest path, or else we can use shortcuts as in Lemma~\ref{lem:sequential}.}
	
	\begin{lemma}
		\label{lem:shortest}
		If a shortest path between two tiles is sequentially constructible, then every shortest path \newlyrevised{between these two tiles} is sequentially constructible.
	\end{lemma}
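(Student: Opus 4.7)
The plan is a local-modification argument: I will show that (i) any two shortest paths between $t_1$ and $t_2$ in a simple polyomino are connected by a finite sequence of \emph{2-by-2 swaps}, where a swap exchanges one corner of a fully contained unit $2\times2$ square along the path for the diagonally opposite corner, and (ii) a single such swap preserves sequential constructibility. The lemma then follows by induction on the number of swaps needed to transform the given sequentially constructible shortest path into any other shortest path between the same endpoints.

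For claim (i), I would use the structural characterization in the remark following Lemma~\ref{lem:sequential}: every shortest path decomposes into monotone staircases separated by reflex tiles. Since $P$ is hole-free, the region enclosed by two shortest paths $W$ and $W'$ is a topological disk tiled by $2\times2$ squares fully contained in $P$, and a standard local-move argument on monotone lattice paths in such a disk converts $W$ into $W'$ using only swaps at these squares.

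For claim (ii), consider a swap at a square $\{a,b,c,d\}$ with $a,b$ diagonally opposite; suppose $W$ uses $c$ while $W'$ uses $d$, with $a$ and $b$ the path neighbors of the swap position. I would build $W'$ in the same order as $W$ up to $a$, insert $d$ in place of $c$, then $b$, and proceed with the unchanged remaining tile sequence. Two things must then be verified: (1) at the swap step, $d$ (and subsequently $b$) can be slid into position with respect to the current partial polyomino; and (2) every later tile $t_j$ remains addable, noting that the partial polyomino differs from that of $W$ only by the replacement of $c$ with $d$.

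Point (2) follows from the observation that $c$ and $d$ occupy adjacent columns and adjacent rows, and hence contribute to exactly the same three-wide column band and three-wide row band appearing in any blocking set $N_{t_j}, E_{t_j}, S_{t_j}, W_{t_j}$ of a later tile; thus any direction that was free for $t_j$ in $W$ remains free in $W'$. The main obstacle is point (1), which requires a short case analysis on the direction from which $T_{i-1}$ approaches $a$ and on the direction from which $c$ was originally slid in $W$; in each subcase, hole-freeness of $P$ together with the emptiness of the blocking set used by $c$ yields the emptiness of a corresponding symmetric blocking set for $d$, and the same analysis applies to $b$.
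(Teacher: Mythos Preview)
Your swap-based strategy is a reasonable route, and it is genuinely different from the paper's argument (which works with the full bounding box of each maximal monotone subpath and reuses the blocking argument from Lemma~\ref{lem:sequential} in one shot). However, your justification of point~(2) contains a real gap.

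You assert that since $c$ and $d$ lie in adjacent columns and adjacent rows, they ``contribute to exactly the same three-wide column band and three-wide row band'' and hence every direction free for a later tile $t_j$ in $W$ remains free in $W'$. This is false. Take $a=(x,y{+}1)$, $b=(x{+}1,y)$, $c=(x,y)$, $d=(x{+}1,y{+}1)$, and let $t_j=(x{+}2,y)$ be the tile immediately following $b$. In $W$ the partial assembly at this step contains $a,c,b$; none of them lies in $N_{t_j}$ (the only candidate $b$ is in column $x{+}1$ but at the same row $y$, not strictly above). In $W'$ the partial assembly contains $a,d,b$, and now $d=(x{+}1,y{+}1)$ \emph{is} in $N_{t_j}$: column within~$1$ and strictly above. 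So north can be free in $W$ and blocked in $W'$; the presence of $a$ and $b$ does not compensate, because $b$ shares $d$'s column but not a higher row. Thus your blanket claim for point~(2) fails precisely at the tile right after the swap.

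The repair is not hard, but it needs an ingredient you do not invoke here: monotonicity of the segment in which the swap occurs. In a monotone staircase going, say, east--south, every tile already placed lies weakly northwest of the current tile, so the east and south blocking sets of the next tile are automatically empty regardless of whether you used $c$ or $d$; hence some insertion direction survives even when north gets newly blocked by $d$. Once you make this explicit, points~(1) and~(2) collapse into the single observation that along a monotone segment the two ``forward'' directions are always free, which is exactly what the paper's bounding-box argument captures directly without the swap machinery. In short: your approach can be made to work, but the stated reason for~(2) is wrong, and fixing it essentially reduces to the paper's argument.
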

	
	\begin{proof}
		Consider a constructible shortest path $W$, a maximal subpath $W'$ that is $x$-$y$-monotone, and a bounding box $B$ around $W'$.
		Due to $L_1$-metric, any $x$-$y$-monotone path within $B$ is as long as $W'$.
		Suppose some path within $B$ is not constructible. Then we can use the same blocking argument as in Lemma~\ref{lem:sequential} to prove that \remove{also} $W'$ cannot be constructible \newlyrevised{as well}, contradicting that $W$ is constructible.
	\end{proof}
	
	\remove{\begin{theorem}
			We can check in polynomial time if there exists a (sequentially) constructible path from one tile in a simple polyomino to another.
		\end{theorem}
		\begin{proof}
			We search for one shortest path and check, whether it is sequential constructible.
			Combining Lemma~\ref{lem:sequential} and Lemma~\ref{lem:shortest} that a test for one shortest path is sufficient.
		\end{proof}}
		
		Using Lemma~\ref{lem:sequential} and Lemma~\ref{lem:shortest}, we are ready to prove Theorem~\ref{th:N2}.
		
		\medskip	
		{\bf Proof of Theorem~\ref{th:N2}}.
		Because it suffices to check one shortest path between two tiles, we can look at the BFS tree from each tile and then proceed like we did in Theorem~\ref{th:sequential:thin}. Thus, for each tile we perform a BFS in time $O(N)$ and a DFS with blocking look-ups in time $O(N\log N)$, which results in a total time of $O(N^2\log N)$.
		\qed

	\section{Three-Dimensional Shapes}
\label{sec:3d-shapes}
An interesting and natural generalization of TAP is to consider three-dimensional shapes, i.e., polycubes.
The local considerations for simply connected two-dimensional shapes are no longer sufficient. In the following we show that deciding whether a polycube is constructible is \NP-hard.
\revised{Moreover, it is \NP-hard} to check whether there is a constructible path from a start cube $s$ to an end cube $t$ in a partial shape.

As a stepping stone, we start with a restricted version of the three-dimensional problem.

\remove{	If we are given a polycube we are able to show that it is \NP-hard to decide if it can be built even if we prohibit insertions from below (which is a likely scenario in practice).}
	\begin{theorem}
		\label{th:constr:polycube;hard1}
		It is \NP-hard to decide if a polycube can be built by inserting tiles only from above, north, east, south, and west.
	\end{theorem}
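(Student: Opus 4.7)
The plan is to reduce from 3-SAT, which is \NP-complete. Given a Boolean formula $\varphi$ with $n$ variables and $m$ clauses, I would construct in polynomial time a polycube $P_\varphi$ whose constructibility under the restricted rule set---insertions only from above, north, east, south, and west---is equivalent to the satisfiability of $\varphi$. Since the equivalence between constructibility and connectivity-preserving decomposition (Theorem~2) rests on a purely local single-step reversibility argument, it carries over unchanged to three dimensions with the restricted direction set. Thus, it is easier to design the reduction in terms of decomposition sequences in which tiles may only be removed upward, northward, eastward, southward, or westward---never downward.

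The construction would use three families of gadgets embedded in a large 3D scaffold. A \emph{variable gadget} for each $x_i$ is a local block admitting exactly two legal decomposition orders, encoding the truth value $x_i\in\{\text{true},\text{false}\}$; the prohibition against downward removal plays the role of gravity and is precisely what rules out a "third option" despite the extra geometric freedom in 3D. A \emph{clause gadget} for each $C_j$ contains a critical cube that is blocked in four of its five allowed removal directions, with the fifth opened only if at least one of the three literal wires feeding into it has been decomposed in its satisfying order. \emph{Wire gadgets} are long, straight channels that transmit the chosen order at each variable gadget to every clause gadget in which the corresponding literal appears; they are positioned so that their removability hinges on which end of the variable gadget was cleared first.

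Correctness would be argued in two directions. For the forward direction, a satisfying assignment of $\varphi$ dictates, at each variable gadget, which of the two decomposition orders to take; the wires propagate these choices, and the clause gadgets can be emptied because at least one incoming literal is satisfying, so the local orders compose into a valid global decomposition of $P_\varphi$. For the converse, any valid decomposition must select one of the two orders at every variable gadget and must clear every clause gadget, so reading off the chosen orders yields an assignment satisfying every clause.

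The main obstacle is the combinatorial engineering of the gadgets and their composition. The variable gadget must be rigid enough that no third decomposition order exists despite there being five allowed removal directions, while still being flexible enough that both satisfying orders are truly realizable. The clause gadget must implement a disjunction through purely geometric blocking relations without accidentally becoming satisfiable from some unintended removal sequence. And wires must be geometrically shielded so that distant portions of $P_\varphi$ do not accidentally block or unblock each other---a genuine challenge in 3D, where tiles can pass each other in ways impossible in 2D. The asymmetry introduced by forbidding downward removals is exactly the resource that makes these gadgets feasible, but it must be exploited carefully and not circumvented by the extra freedom afforded elsewhere. Once gadget dimensions are fixed in $O(1)$ each, polynomiality of $|P_\varphi|$ in $n+m$ is routine.
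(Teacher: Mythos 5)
There is a genuine gap here: what you have written is a reduction template, not a reduction. You correctly pick 3-SAT, correctly observe that the construction/decomposition equivalence is a local, direction-preserving argument that transfers to 3D, and correctly identify that the asymmetry (no insertion/removal from below) is the resource to exploit. But you then explicitly defer the entire content of the proof --- the concrete variable, clause, and wire gadgets, their blocking relations, and the argument that no unintended removal order exists --- to ``combinatorial engineering.'' In a gadget-based \NP-hardness proof, that engineering \emph{is} the proof: without explicit gadgets one cannot verify that a variable admits exactly two orders, that the clause cube's ``fifth direction opens only if a satisfying wire was cleared'' is geometrically realizable, or that distant parts of the polycube do not interfere. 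You even flag these as open obstacles yourself, so the proposal does not establish the theorem.

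For comparison, the paper's construction resolves exactly these points with a specific two-layer design rather than wires carrying ``orders.'' A seed tile is forced by a one-way gadget, and the builder must first complete part of the top layer; for each variable this forces a choice of which of two columns in the lower layer (the \emph{true} or the \emph{false} column) becomes blocked from above. The blocked column must then be assembled using only east, south, and west insertions, and each clause contributes a small structure, attached to the columns of its three \emph{negated} literals, of which at most two of the three subparts can be built under that restricted direction set. Hence at least one negated literal per clause must remain unblocked, i.e., at least one literal is true; unblocked parts are trivially filled from above at the end. Note that the disjunction is implemented by the interplay between the blocking choice in the top layer and the restricted-direction buildability below, not by a single ``critical cube'' whose accessibility is toggled by wire decompositions --- the latter mechanism is precisely the part of your plan whose realizability is unestablished. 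To turn your proposal into a proof you would need to exhibit such gadgets explicitly and verify both directions of the equivalence against all possible insertion orders, which is where all the work lies.
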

	
	\begin{figure}[h!]
		\centering
		\resizebox{0.9\textwidth}{!}{\includegraphics{./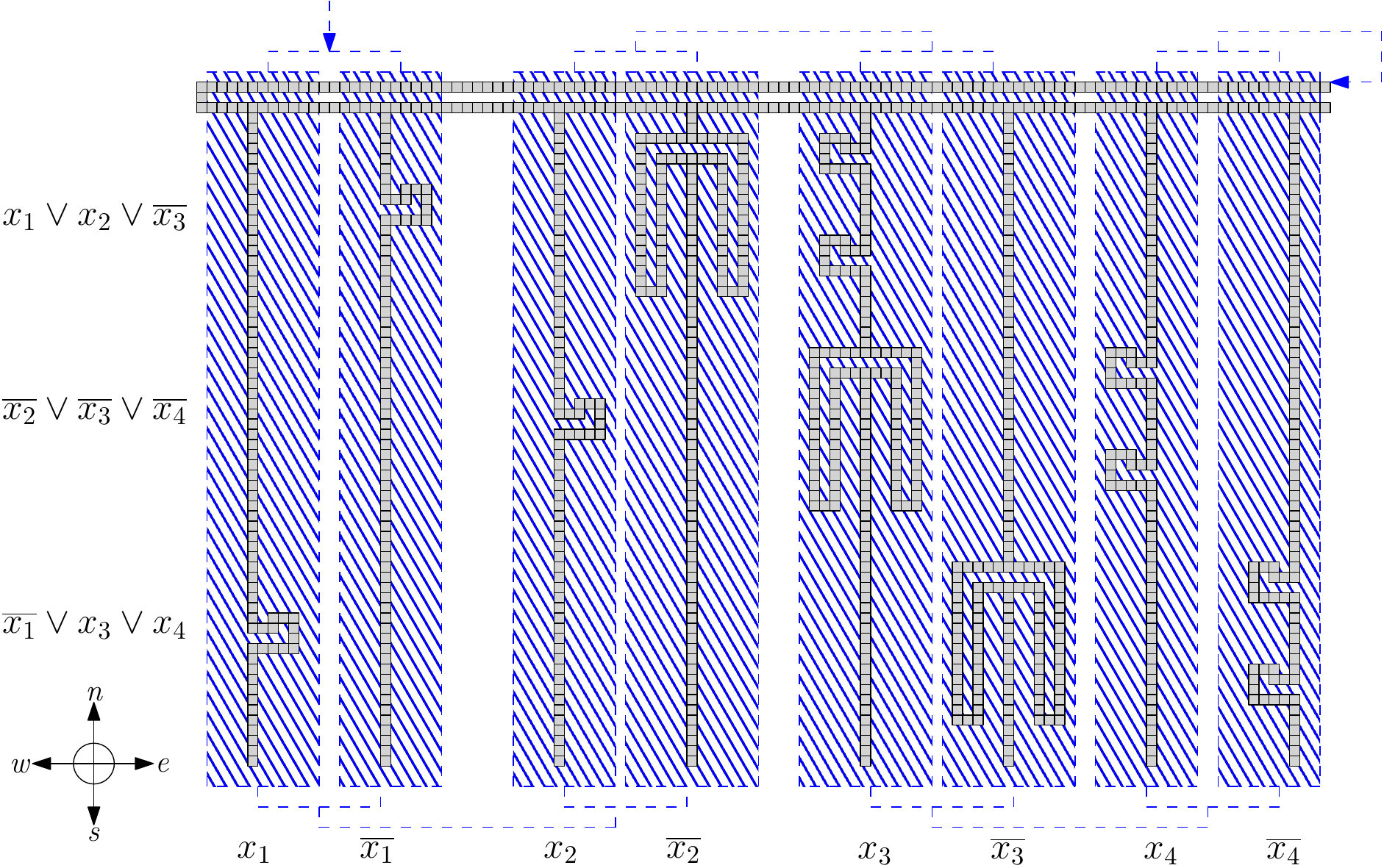}}
		\caption{	Top-view on the polycube. There is a vertical part going south for the \emph{true} and \emph{false} assignment of each variable.
			We start building at the top layer (blue) and have to block either the \emph{true} or the \emph{false} part of each variable from above.
			The blocked parts have to be built with only inserting from east, west, and south.
			For each clause, the parts of the inverted literals are modified to allow at most two of them being built in this way.
			All other parts can simply be inserted from above in the end.}
		\label{fig:constr:polycube:lowerlayer}
	\end{figure}
	\begin{figure}[h!]
		\begin{subfigure}[b]{0.6\textwidth}
			\centering
			\resizebox{0.9\columnwidth}{!}{\includegraphics{./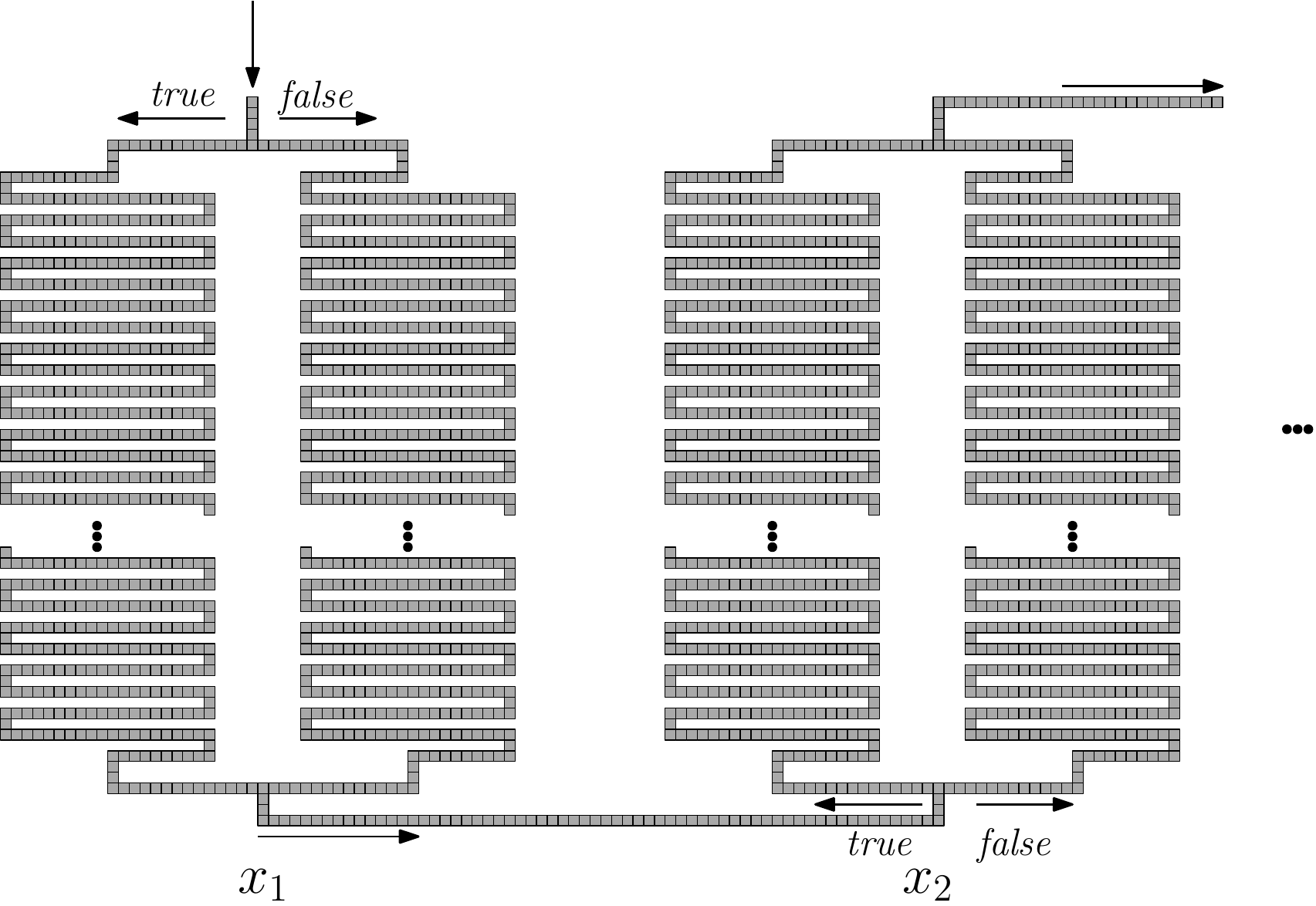}}
		\end{subfigure}\hfill
		\begin{subfigure}[b]{0.35\textwidth}
			\centering
			\resizebox{0.95\columnwidth}{!}{\includegraphics{./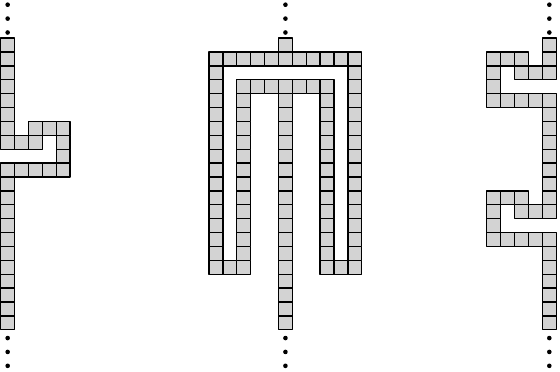}}
			\vspace{1cm}
		\end{subfigure}
		\caption{Top-view on the polycube. (Left)	In the beginning we have to block the access from the top for either the \emph{true} or \emph{false} part of the variable.
			The variable is assigned the blocked value. 
			(Right)	Three gadgets for a clause.
			Only two of them can be built if the tiles are only able to come from the east, south, and west.}
		\label{fig:constr:polycube}
	\end{figure}
	
	\begin{proof}
		We prove hardness by a reduction from \textsc{3SAT}.
		%For simplicity we consider only an example, the details can easily be deduced from it.
		A visualization for the formula $(x_1\vee x_2 \vee \overline{x_3}) \wedge (\overline{x_2}\vee \overline{x_3}\vee \overline{x_4}) \wedge (\overline{x_1}\vee x_3 \vee x_4)$ can be seen in Fig.~\ref{fig:constr:polycube:lowerlayer}.
		It consists of two layers of interest (and some further auxiliary ones for space and forcing the seed tile by using the one-way gadget shown in Fig.~\ref{fig:oneway}).
		In the beginning, one has to build a part of the top layer (highlighted in blue in the example, details in Fig.~\ref{fig:constr:polycube}~(Right)).
		Forcing a specific start tile can be done by a simple construction.
		For each variable we have to choose to block the left (for assigning \emph{true}) or the right (for assigning \emph{false}) part of the lower layer.
		In the end, the remaining parts of the upper layer can trivially be filled from above.
		% but right now we want to keep as much freedom to build the lower layer.
		The blocked parts of the lower layer then have to be built with only inserting tiles from east, south, or west.
		In the end, the non-blocked parts can be filled in from above.
		For each clause we use a part (as shown in Fig.~\ref{fig:constr:polycube}~(Left)) that allows only at most two of its three subparts to be built from the limited insertion directions.
		We attach these subparts to the three variable values not satisfying the clause, i.e., the negated literals.
		This forces us to leave at least one negated literal of the clause unblocked, and thus at least one literal of the clause to be \emph{true}.
		Overall, this allows us to build the blocked parts of the lower layers only if the blocking of the upper level corresponds to a satisfying assignment.
		If we can build the \emph{true} and the \emph{false} parts of a variable in the beginning, any truth assignment for the variable is possible.
	\end{proof}
	
	\begin{figure}[h!]
		\centering
		\resizebox{0.28\textwidth}{!}{\includegraphics{./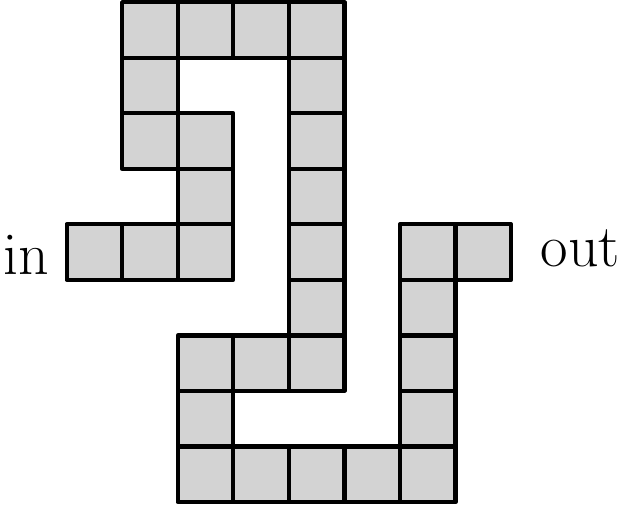}}
		\hfil
		\resizebox{0.52\textwidth}{!}{\includegraphics{./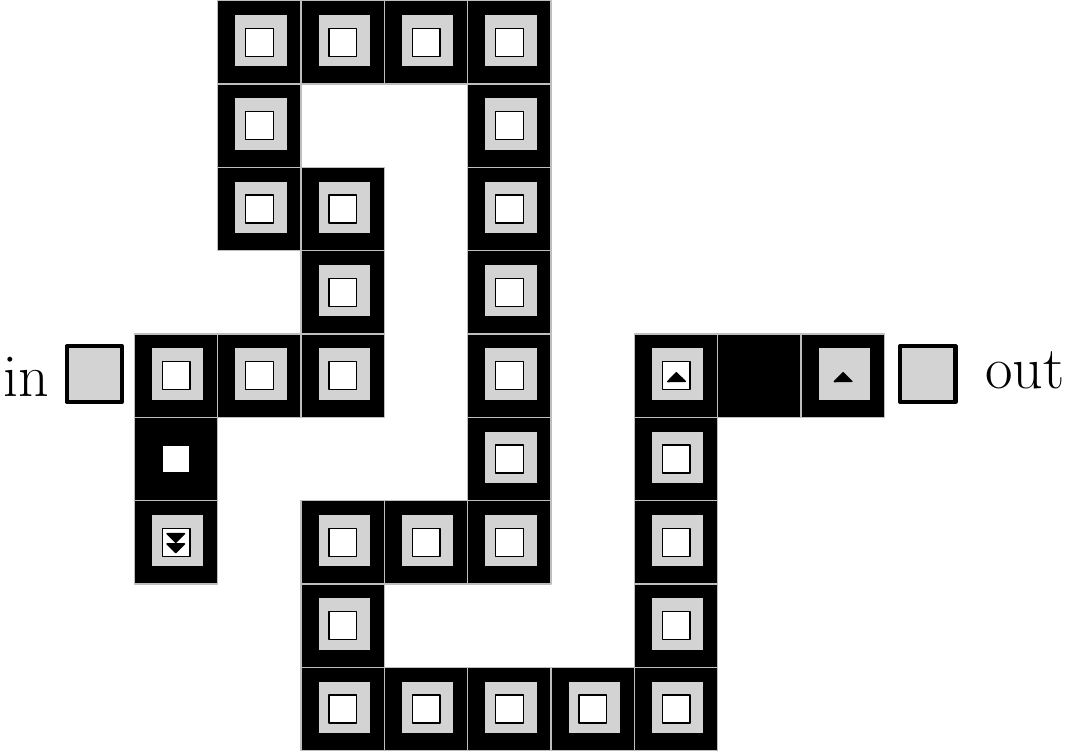}}
		\caption{(Left) This polyomino can only be constructed by
			starting at ``in'' and ending at ``out''. 
			(Right) \revision{By adding layers above (white) and below (black)
				this polyomino starting at the ``out''-tile, we obtain a polycube that is only
				constructible by starting at ``in'' (from the other direction we must build the black and white layer first and must then build the grey layer with 2D directions). Triangles denote where we can switch to another layer.} With this gadget we can enforce a seed
			tile.} \label{fig:oneway}
	\end{figure}

		%For simplicity we consider only an example, the details can easily be deduced from it.
The construction can be extended to assemblies with arbitrary direction.

	\begin{theorem}
	\label{th:any}
		It is \NP-hard to decide if a polycube can be built by inserting tiles from any direction.
	\end{theorem}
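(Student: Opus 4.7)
The plan is to extend the reduction of Theorem~\ref{th:constr:polycube;hard1} by augmenting the polycube with auxiliary structure that blocks every relevant insertion from below, thereby reducing the general six-direction problem to the five-direction problem whose hardness was already established. Since insertions from above, north, east, south, and west are already accounted for, the only new flexibility to neutralize is insertion from the $-z$ direction. Concretely, I would take the polycube $P_\varphi$ produced in the proof of Theorem~\ref{th:constr:polycube;hard1} from a 3SAT formula $\varphi$ and attach a solid rectangular slab $F$ (a ``floor'') two layers below the lower gadget layer, horizontally covering the entire footprint of $P_\varphi$. Once $F$ is in place, any cube approaching a column $(x,y)$ of the gadget region from $z=-\infty$ comes to rest upon reaching the floor and can never penetrate to the critical lower-layer positions above it. The floor is connected to $P_\varphi$ by a single narrow column on the periphery, placed far from the variable and clause gadgets so that it does not interact with them.

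Next, I would use the 3D one-way gadget of Fig.~\ref{fig:oneway} to force the seed cube to lie on the floor $F$. This pins the construction order so that $F$ must be assembled before any cube of a variable or clause gadget can be placed; in particular, during the entire period when gadget cubes are being inserted, the floor is already present and blocks insertions from below. From this point onward, any valid construction sequence of the augmented polycube, restricted to the gadget region, behaves exactly as a construction sequence in the five-direction model of Theorem~\ref{th:constr:polycube;hard1}. Conversely, given a satisfying assignment of $\varphi$, one first assembles the floor (which has no obstructions beneath it, so a simple horizontal sweep suffices), then climbs the connecting column, and finally executes the construction from Theorem~\ref{th:constr:polycube;hard1}. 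Hence the augmented polycube is constructible if and only if $\varphi$ is satisfiable.

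The main obstacle I anticipate is careful bookkeeping to ensure that the added floor, connector column, and one-way gadget do not accidentally create a cheating path---for instance, verifying that the vertical gap between the floor and the lower gadget layer cannot be bridged by a cube arriving from below and wedging against some intermediate piece of the structure, and that the connector column itself cannot be used to approach a gadget position from below through a sideways detour that exploits the new degree of freedom. These checks reduce to geometric properties of the ballistic straight-line motion model of Tilt Assembly, so placing the floor sufficiently far below the gadget layer and routing the connector away from the gadgets should suffice to make the reduction go through.
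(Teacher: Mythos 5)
Your proposal is correct and matches the paper's argument: the paper also adds an extra layer below the construction of Theorem~\ref{th:constr:polycube;hard1} that blocks insertions from below and uses the one-way gadget of Fig.~\ref{fig:oneway} to force that layer to be built first. Your additional remarks about the connector column and avoiding cheating paths are just the routine bookkeeping the paper leaves implicit.
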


	\begin{proof}
		We add an additional layer below the construction in Theorem~\ref{th:constr:polycube;hard1} that has to be built first and blocks access from below.
		Forcing the bottom layer to be built first can again be done with the one-way gadget shown in Fig.~\ref{fig:oneway}.
	\end{proof}

The difficulties of construction in 3D are highlighted by the fact that even identifying constructible
connections between specific positions is \NP-hard.

	\begin{theorem}
	\label{th:path}
		It is \NP-hard to decide whether a path from one tile to another can be built in a general polycube.
	\end{theorem}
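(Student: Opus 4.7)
The plan is to reduce from \textsc{3SAT}, reusing and adapting the gadget library developed in the proofs of Theorem~\ref{th:constr:polycube;hard1} and Theorem~\ref{th:any}. Given a formula $\varphi$, I would construct a polycube $P$ together with two distinguished cubes $s,t \in P$ such that $P$ admits a constructible path (i.e.~a path subpolycube that can be assembled tile by tile) from $s$ to $t$ if and only if $\varphi$ is satisfiable. The overall shape of $P$ is a ``maze'' arranged in layers: a one-way entry gadget forcing the seed at $s$, followed by a chain of variable selector gadgets, then a chain of clause gadgets, and finally a corridor leading to $t$. The variable and clause gadgets are embedded in the polycube $P$ but constructibility of the \emph{path} is what matters.

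First, I would design each variable gadget as a junction with exactly two outgoing corridors (a \emph{true} branch and a \emph{false} branch) that rejoin before the next variable gadget; since the path traverses each junction exactly once, it is forced to commit to one of the two branches, which we read off as the truth assignment of that variable. Next, each clause gadget would be a local region through which the path must cross, realised by three alternative literal-corridors analogous to the three-subpart gadget in Fig.~\ref{fig:constr:polycube}~(Right). The crucial modification is that each literal-corridor, when routed, places blocking tiles in positions that also obstruct the insertion direction of one specific variable branch elsewhere in $P$; concretely, choosing the \emph{false} branch in a variable gadget forces certain early tiles into positions that render the corresponding literal-corridor non-constructible later (by the same kind of blocking argument already used in Theorem~\ref{th:constr:polycube;hard1}). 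To keep construction order canonical, one-way gadgets (Fig.~\ref{fig:oneway}), stacked above and below the relevant layers, fix the direction in which the path is built, so the variable choices are made strictly before the clause gadgets are reached.

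Correctness is then split into two directions. For soundness, any constructible $s$–$t$ path determines a branch selection per variable, hence an assignment $\alpha$, and the mere fact that each clause gadget was traversed means at least one of its three literal-corridors was constructible, i.e.~was not blocked by the committed branches; by construction this implies that at least one literal of each clause evaluates to true under $\alpha$, so $\alpha$ satisfies $\varphi$. For completeness, a satisfying assignment tells us which branch to take in every variable gadget and which literal-corridor to use in every clause gadget, and the blocking analysis shows that this prescribed path is assemblable in the induced order.

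The main obstacle will be the geometric realisation of the clause gadgets: I need a polycube region in which three alternative literal-corridors exist, every individual corridor is constructible in isolation, but any attempt to traverse a corridor whose underlying literal is falsified under the chosen variable branches is blocked by tiles that the path itself was forced to place earlier. Because we are now building only a one-dimensional object (a path) inside a fixed polycube, rather than building the polycube itself, I cannot rely on the ambient tiles of $P$ to act as blockers during assembly; instead all blocking must come from the partial path, which requires routing the variable branches so that their committed tiles lie in exactly the critical positions needed to obstruct the falsified literal-corridors. Carefully laying out these interactions in 3D, while preserving the one-way assembly order via the gadgets of Fig.~\ref{fig:oneway}, is where the technical work concentrates.
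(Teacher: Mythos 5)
Your high-level plan coincides with the paper's: a tour-style reduction from \textsc{SAT} in which the path first commits to a \emph{true}/\emph{false} branch for each variable and must afterwards get through one passage per clause, with all blocking supplied by previously placed tiles of the path itself (compare Fig.~\ref{fig:3D:path:SAT}). The problem is that the step you explicitly defer --- ``the geometric realisation of the clause gadgets'' --- is precisely the technical content of the theorem, and you leave it unresolved. The paper closes exactly this hole with the variable-box/dip gadget of Fig.~\ref{fig:variablebox}: at every crossing of a clause line with the line of a falsifying literal, the variable branch is routed through a small box-shaped detour whose traversal seals the box's core off from further insertions, while the clause line must \emph{dip} into the core of at least one of the boxes on its literals in order to reach the other side; if that box has been routed, the dip can no longer be attached, and the dips are insertable only from above and below, so distinct clauses and crossings do not interfere. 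This gadget is what answers your own (correct) observation that a one-dimensional path, not the ambient polycube, has to produce every blocking configuration; your proposal asserts that suitable routings exist (``places blocking tiles in positions that also obstruct \dots'') but never exhibits one, so the soundness and completeness arguments you sketch rest on an unverified geometric claim. As it stands this is a reduction plan, essentially the paper's plan, minus its crux.

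Two smaller points. First, one sentence has the blocking direction reversed (literal-corridors obstructing variable branches) before you restate it correctly as variable branches blocking falsified literal-corridors; the construction only works in the latter direction. Second, you import more machinery than the paper needs: the clause gadget here is not the three-subpart gadget of Fig.~\ref{fig:constr:polycube} from Theorem~\ref{th:constr:polycube;hard1} (that gadget blocks \emph{construction of the polycube}, using ambient tiles), but simply a clause line with three possible crossing points, and the paper's proof of Theorem~\ref{th:path} does not invoke the one-way gadget of Fig.~\ref{fig:oneway} at all, since the two endpoints are part of the input and the interaction is localized to the boxes. Neither of these is fatal, but both reinforce that the decisive missing piece is the concrete box/dip geometry.
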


\begin{figure}[h!]
	\centering
	\includegraphics[scale=0.5]{./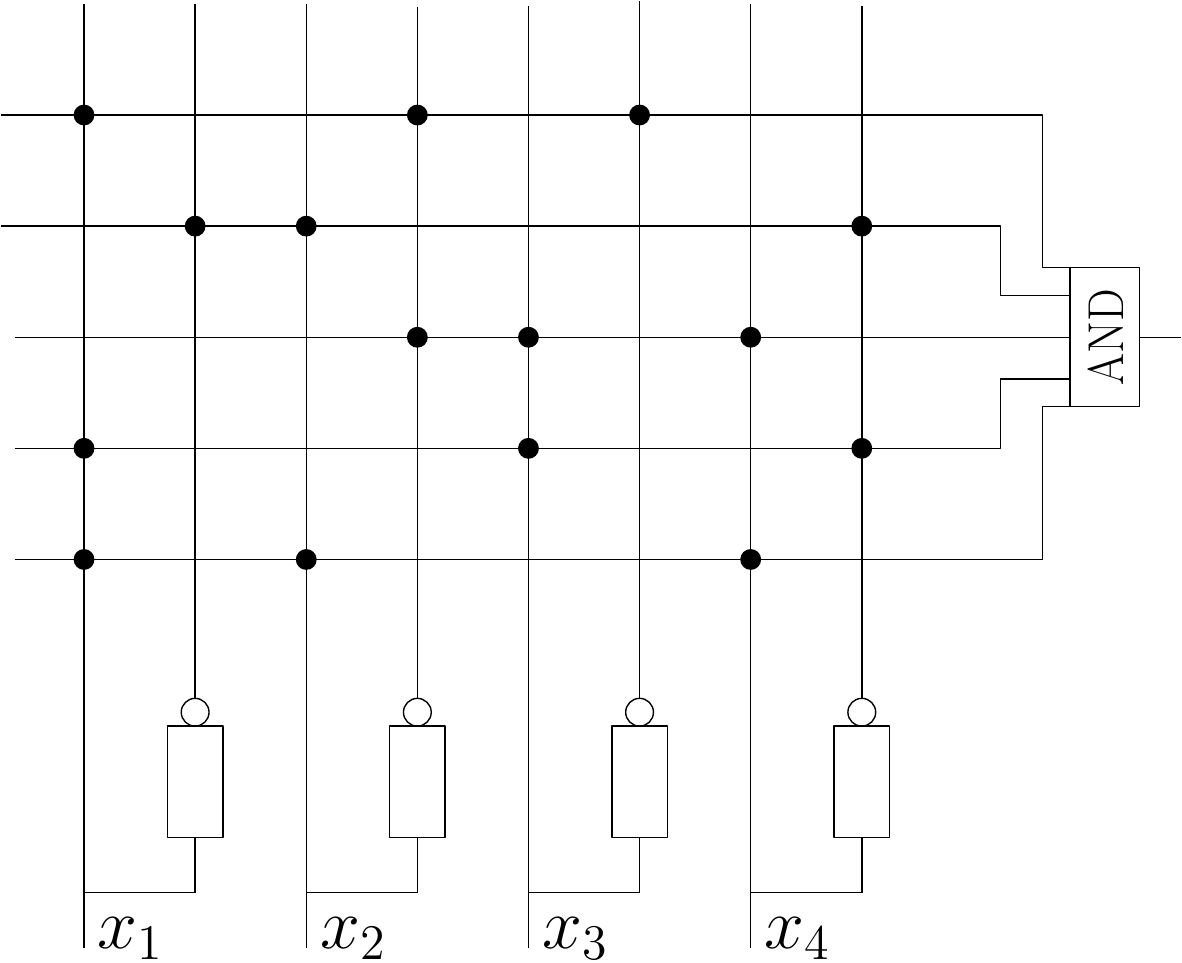}\hspace{1cm}
	\includegraphics[scale=0.5]{./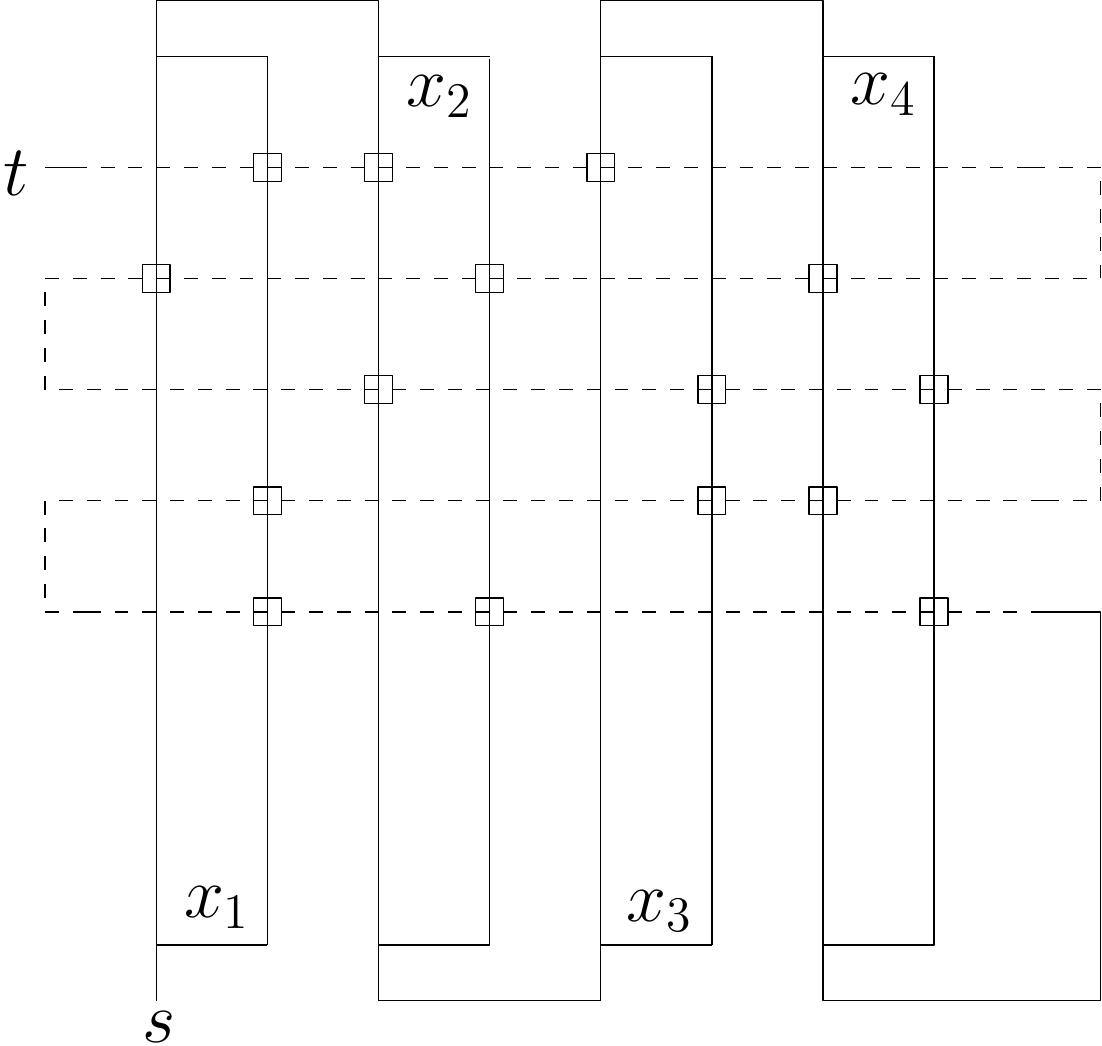}
	\caption{(Left) Circuit representation for the SAT formula $(x_1\vee \overline x_2\vee \overline x_3)\wedge (\overline x_1\vee x_2\vee \overline x_4) \wedge (\overline x_2\vee x_3\vee x_4)\wedge (x_1\vee x_3\vee \overline x_4)\wedge (x_1\vee  x_2\vee  x_4)$. (Right) Reduction from SAT formula. Boxes represent variable boxes.}
	\label{fig:3D:path:SAT}
\end{figure}

\begin{proof}
	
	We prove NP-hardness by a reduction from \textsc{SAT}.
	%Construction
	%Consider a formula in the following representation:
	For each variable we have two vertical lines, one for the \emph{true} setting, one for the \emph{false} setting.
	Each clause gets a horizontal line and is connected with a variable if it appears as literal in the clause, see Fig~\ref{fig:3D:path:SAT}~(Left).
	We transform this representation into a tour problem where, starting at a point $s$, one first has to go through either the \emph{true} or \emph{false} line of each variable and then through all clause lines, see Fig.~\ref{fig:3D:path:SAT}~(Right).
	The clause part is only passable if the path in at least one crossing part (squares) does not cross, forcing us to satisfy at least one literal of a clause.
	As one has to go through all clauses, $t$ is only reachable if the selected branches for the variables equal a satisfying variable assignment for the formula.
	
	%Polycube
	We now consider how to implement this as a polycube.
	The only difficult part is to allow a constructible clause path if there is a free crossing.
	In Fig.~\ref{fig:variablebox}~(Left), we see a variable box that corresponds to the crossing of the variable path at the squares in Fig.~\ref{fig:3D:path:SAT}~(Right).
	It blocks the core from further insertions.
	The clause path has to pass at least one of these variable boxes in order to reach the other side.
	See Fig.~\ref{fig:3D:path:SAT}~(Right) for an example.	
	Note that the corresponding clause parts can be built by inserting only from above and below, so there are no interferences.
\end{proof}

\begin{figure}[th]
	\centering
	\includegraphics[scale=0.3]{./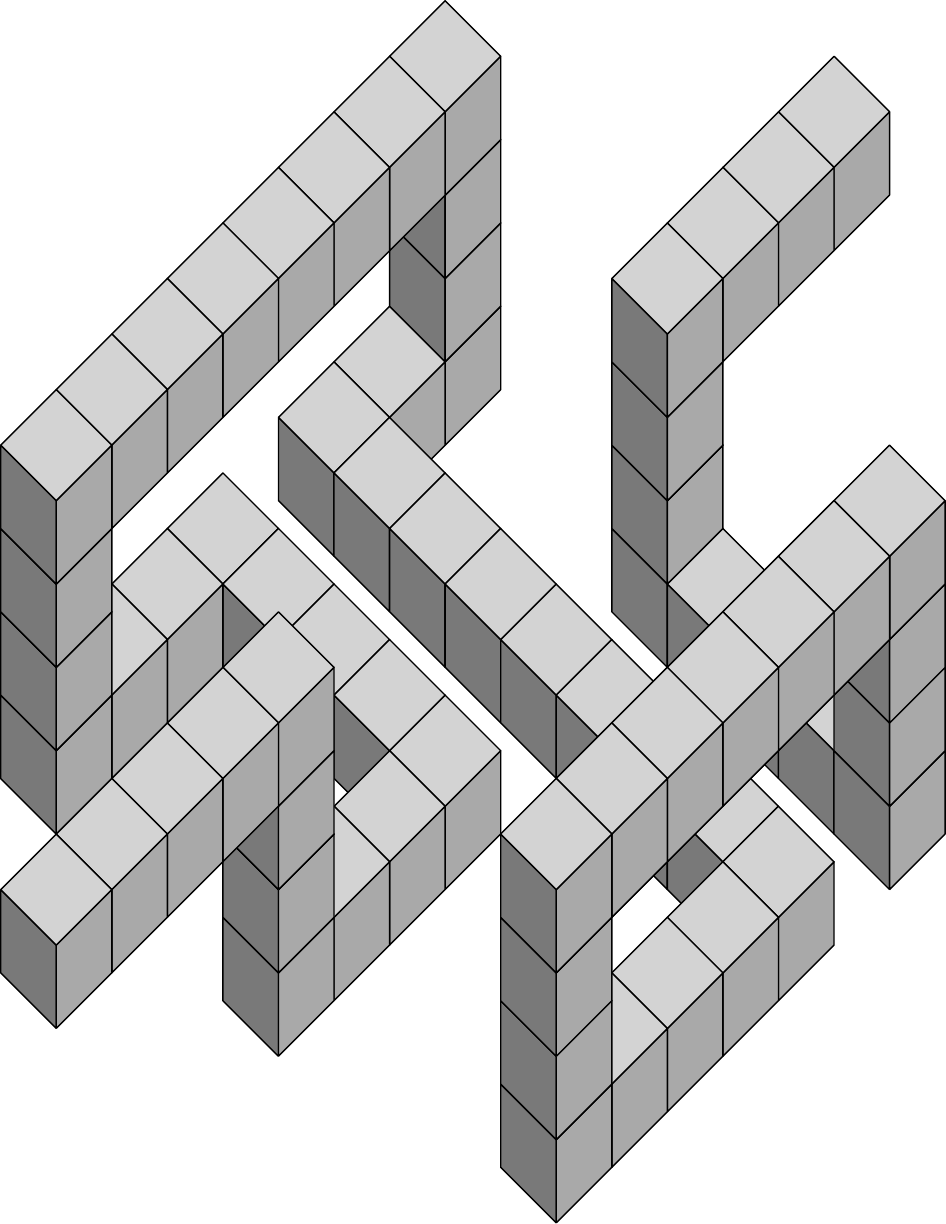}
	\hfil
	\resizebox{0.7\textwidth}{!}{\includegraphics{./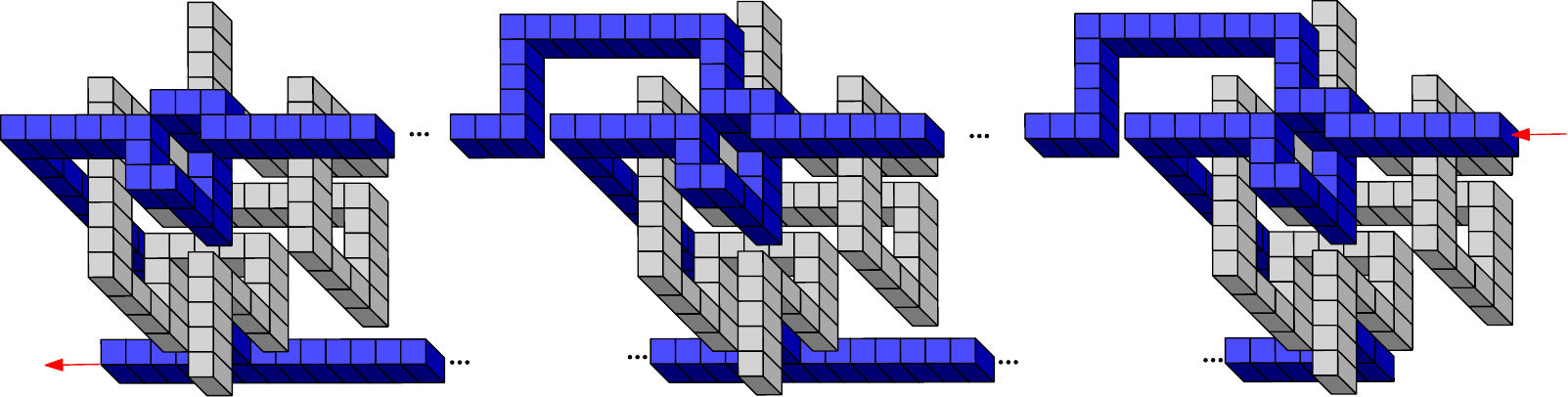}}
	\caption{(Left) Empty variable box. (Right) A clause line (blue) dips into a variable box. 
		If the variable box is built, then we cannot build the dip of the clause line.}
	\label{fig:variablebox}
\end{figure}

	\section{Conclusion/Future Work}
%This work shows that deciding if a simple polyomino can be constructed can be done in $O(N\log N)$ time. 
%But as soon as we want to maximize the number of tiles in a constructible subpolyomino we are in the class of \NP-hard problems and also cannot approximate within a factor $\Omega(N^{\frac 1 3})$. 
%However, we were able to give a $\sqrt N$-approximation for the maximization variant by searching for a long, constructible path.

We have provided a number of algorithmic results for Tilt Assembly. Various unsolved
challenges remain. What is the complexity of deciding TAP for non-simple
polyominoes? 
        While Lemma~\ref{lemma:constr:sop:removeconex} can be applied to all polyominoes, 
we cannot simply remove any convex tile. 
%        In fact, one can easily find examples in which the removal of the wrong convex tile of a decomposable polyomino leaves a non-decomposable polyomino, as already seen in Fig.~\ref{fig:constr:removingwrongconvex}.
        %Deciding which half of a cycle to build first looks non-trivial; however,  it is also difficult 
%to create the dependencies on such decisions that are needed to prove NP-hardness.
%
Can we find a constructible path in a polyomino from a given start and endpoint? This would help in finding a $\sqrt N$-approximation for non-simple polyominoes. How can we optimize the total makespan for 
constructing a shape?  And what options exist for non-constructible shapes?

An interesting approach may be to consider {\em staged} assembly, as shown in Fig.~\ref{fig:staged}, where
a shape gets constructed by putting together subpolyominoes, instead of adding one tile at a time.
This is similar to staged tile self-assembly~\cite{demaine2008staged,demaine2017new}.
This may also provide a path to sublinear assembly times, as a hierarchical assembly allows
massive parallelization. We conjecture that a makespan of $O(\sqrt{N})$ for a polyomino with $N$
tiles can be achieved.

All this is left to future work.

\begin{figure}[h]
	\centering
	\resizebox{0.2\textwidth}{!}{\includegraphics{./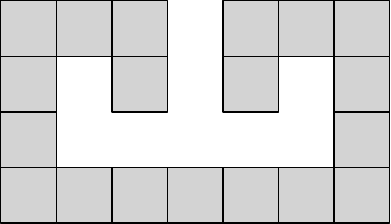}}
	\hspace{0.7cm}
	\resizebox{0.285\textwidth}{!}{\includegraphics{./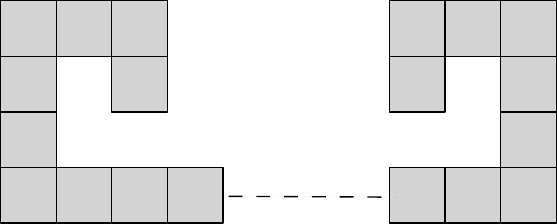}}
	\caption{(Left) A polyomino that cannot be constructed in the basic TAP model. (Right) Construction in a staged assembly model by putting together subpolyominoes.}
	\label{fig:staged}
\end{figure}

	\bibliography{bibliography}
\end{document}